\numberwithin{equation}{section}
\newtheorem{theorem}{Theorem}[section]
\newtheorem{lemma}[theorem]{Lemma}
\newtheorem{proposition}[theorem]{Proposition}
\theoremstyle{definition}
\def\wt{\widetilde}
\def\eps{\varepsilon}
\def\N{\mathbb{N}}
\def\R{\mathbb{R}}
\let\e=\varepsilon
\def\tr{\text{\rm tr}}
\newcommand{\esssup}{\mathop{\rm ess{\,}sup}}
\def\square{\hbox{$\sqcap\kern-7pt\sqcup$}}
\def\be{\begin{equation}}
\def\ee{\end{equation}}
\def\bea{\begin{eqnarray}}
\def\eea{\end{eqnarray}}
\def\o{\omega}
\def\ˆ{^{•}}
\DeclareMathOperator*{\argmin}{argmin}
\title{From the Hartree equation to the Vlasov-Poisson system:\\ strong convergence for a class of mixed states}
\author{Chiara Saffirio}
\def\adresse{
\begin{description}

\item[C. Saffirio:] Department of Mathematics and Computer Science, \\ University of Basel, Spiegelgasse 1, CH-4051 Basel, Switzerland\\
E-mail: \texttt{chiara.saffirio@unibas.ch}
\end{description}
}
\date{\today}
\begin{document}

\maketitle

\begin{abstract}
We consider the evolution of $N$ fermions interacting through a Coulomb or gravitational potential in the mean-field limit as governed by the nonlinear Hartree equation with Coulomb or gravitational interaction. In the limit of large $N$, in our setting corresponding to the semiclassical limit, we study the convergence in trace norm towards the classical Vlasov-Poisson equation for a special class of mixed quasi-free states. 
\end{abstract}

\section{Introduction}

In this paper we shall focus on the derivation of the three-dimensional Vlasov-Poisson system
\be\label{eq:VPS}
\left\{\begin{array}{ll}
 \partial_t f(t,x,v)+v\cdot\nabla_x f(t,x,v)+E\cdot\nabla_v f(t,x,v)=0\,,& \\\\
 E(t,x)=\left(\nabla\frac{\gamma}{|\cdot|}*\varrho\right)(t,x)\,,&\quad (t,x,v)\in\R_+\times\R^3\times\R^3\\\\
 \varrho(t,x)=\int f(t,x,v)\,dv\,,&
\end{array}
\right.
\ee
from the quantum $N$-body dynamics in a joint mean-field and semiclassical regime. More precisely, we will address the semicalssical limit from the mean-field quantum description given by the Hartree equation towards the Vlasov-Poisson equation. In Eq. \eqref{eq:VPS}, $\gamma=-1$ or $\gamma=1$ if the interaction is respectively gravitational or Coulombian. The system \eqref{eq:VPS} is an effective equation, whose unknown $f:\R_+\times\R^3\times\R^3\to \R$ is a time dependent function on the phase space modelling the probability density of particles in a plasma under the effect of a self-induced field $E$, dependent on the spatial density $\varrho:\R_+\times\R^3\to\R_+$. From a physical viewpoint, the attractive case ($\gamma=-1$) describes the motion of galaxy clusters under the gravitational field with many applications in astrophysics.  The repulsive case ($\gamma=1$) represents the evolution of charged particles in presence of their self-consistent electric field and it is used in plasma physics or in semi-conductor devices.  In this context,  the self-induced field $E(t,x)$ is a conservative force, hence there exists a real-valued function of time and space $U(t,x)$ such that $E=\nabla_x U$ which satisfies the Poisson equation $-\Delta_x U = \varrho_t$. More precisely, Eq. \eqref{eq:VPS} can be rewritten as a Vlasov equation coupled with a Poisson equation, whence the name Vlasov-Poisson system.

Many authors have been investigating the problem of deriving the Vlasov-Poisson system \eqref{eq:VPS} from the many-body quantum dynamics and its understanding is nowadays rather complete, at least in the case of particles at positive temperature, without statistics or obeying the Bose statistics. The situation is much less clear when dealing with fermions. The aim of this paper is to provide a better understanding of the derivation of the Vlasov-Poisson system from the dynamics of $N$ particles obeying the Fermi statistics in a coupled mean-field and semiclassical scaling. More precisely, it focuses on the vertical arrow of the diagram in \eqref{fig:diagram} when the interaction among particles is given by the Coulombian or by the gravitational potential.
\smallskip

{\it Many-body dynamics.} We consider a system of $N$ fermions in dimension $d=3$ interacting through a Coulomb potential (or  a gravitational potential). A state of the system is described by a wave function $\psi_N\in L^2_{a}(\R^{3N})$, the space of square integrable functions which are antisymmetric in the exchange of particles. This property reflects the Pauli exclusion principle, which states that two particles cannot occupy the same quantum state. 

The Hamiltonian associated with a system of $N$ fermions interacting through a Coulomb or gravitational potential which acts on $L^2_{a}(\R^{3N})$ is given by
\be\label{eq:H}
H_N=\sum_{j=1}^N-\Delta_{q_j}+\gamma\sum_{i< j}^N\frac{1}{|q_i-q_j|}
\ee
where $q_j\in\R^3$ denotes the position of the $j$-th particle, the potential energy is positive in the repulsive case $\gamma=+1$, i.e. for the Coulomb potential, and negative $\gamma=-1$ in the case of gravitational interaction. We shall focus now on the repulsive case, though the exact same considerations can be done for the attractive case by changing the sign in front of the potential energy. \\
Given a quantum state $\psi_N\in L^{2}_{a}(\R^{3N})$, its time evolution is driven by the $N$-body Schr\"{o}dinger equation
\be\label{eq:MBS}
\left\{
\begin{array}{l}
i\partial_\tau\psi_{N,\tau}(q_1,\dots,q_N)=H_N\psi_{N,\tau}(q_1,\dots,q_N)\,,\\\\
\psi_{N,0}(q_1,\dots,q_N)=\psi_{N}(q_1,\dots,q_N)\,.
\end{array}
\right.
\ee
Though the Cauchy problem \eqref{eq:MBS} admits the unique solution $\psi_{N,t}=e^{-itH_N}\psi_N$, the number of particles $N$ is typically so large that we cannot extract any useful information out of it. We therefore look for an effective equation which approximates in a suitable regime the $N$-body dynamics.
\smallskip

{\it Fermionic mean-field regime.} In this paper we are interested in the mean-field regime. To see how the mean-field scaling arises in this context, we consider an electrically neutral atom made of $N$ electrons (fermions) and a nucleus fixed at the origin. The associated Hamiltonian is given by
\be\label{eq:H-atom}
H_N^{\rm atom}=\sum_{j=1}^N\left[-\Delta_{q_j}-\frac{N}{|q_j|}\right]+\sum_{i<j}^N\frac{1}{|q_i-q_j|}\,.
\ee
From Thomas-Fermi theory (see for instance \cite{Lieb, LSi}), we know that the $N$ electrons localise at distance $O(N^{-\frac{1}{3}})$ from the fixed nucleus. It is therefore natural to scale the position variables by a factor $N^{-\frac{1}{3}}$. This leads to rewrite the Hamilton operator \eqref{eq:H-atom} in the scaled variables as
\be\label{eq:scaled-atom}
H_N^{\rm atom}=\sum_{j=1}^N\left[-N^{\frac{2}{3}}\Delta_{x_j}-\frac{\ N^{\frac{4}{3}}}{|x_j|}\right]+N^{\frac{1}{3}}\sum_{i<j}^N\frac{1}{|x_i-x_j|}\,,
\ee
with $x_j=N^{\frac{1}{3}}q_j$. We introduce the positive  parameter 
\be\label{eq:eps}
\e_N=N^{-\frac{1}{3}}
\ee
to rewrite \eqref{eq:scaled-atom} as
\be
H_N^{\rm atom}=N^{\frac{4}{3}}\left[ \sum_{j=1}^N\left(-\e_N^2\Delta_{x_j}-\frac{1}{|x_j|}\right)+\frac{1}{N}\sum_{i<j}^N\frac{1}{|x_i-x_j|}\right]\,.
\ee
We now scale the time variable in order to absorb the factor $N^{\frac{4}{3}}$ in front of the Hamilton operator, that is equivalent to look at larger time scales in a new time variable $t$, and  we readily see that such a space-time scaling leads to the $N$-body Schr\"{o}dinger equation
\be
i\e_N\partial_t\psi_{N,t}=\left[ \sum_{j=1}^N\left(-\e_N^2\Delta_{x_j}-\frac{1}{|x_j|}\right)+\frac{1}{N}\sum_{i<j}^N\frac{1}{|x_i-x_j|}\right]\psi_{N,t}
\ee
where the factor $N^{-1}$ in front of the interaction is typical of the mean-field regime. We also observe that such a mean-field scaling comes coupled with a semiclassical limit since $\e_N$ defined in \eqref{eq:eps} plays the role of the Plank constant $\hbar$.
This regime is very different from the bosonic one, in which the mean-field approximation and the semiclassical limit are not coupled. \\
For the rest of the paper we will discard the subscript $N$ in $\e_N$ and denote it simply by $\e$ for the  sake of readibility. 
\smallskip

{\it State of  art.} The first derivation of the Vlasov equation from many-body quantum dynamics was obtained in the 80s by Narnhofer and Sewell \cite{NS} for interaction potentials $V\in C^\omega(\R^3)$ and extended to $V\in C^2(\R^3)$ by Spohn in \cite{Spohn81}. These results establish convergence of the dynamics, but no information on the rate of convergence is provided.    
An analogous result has been obtained in \cite{GMP} by analysing the dynamics of factored WKB states 
and combining the mean-field and the semiclassical limit.

\be
    \begin{tikzcd}\label{fig:diagram} 
  \mbox{N-body Schr\"{o}dinger Eq.} \rar [rd,"\substack{N\to\infty\\ \e\to 0}"'] \rar [r,rightarrow, "\quad\quad N\gg 1\quad\quad"] 
    &[7em] \mbox{Hartree Eq.} \arrow[d,rightarrow,"\e\to 0"]  \\ [5em]
&\quad \mbox{Vlasov Eq.} \end{tikzcd}
\ee

%\begin{figure}[htbp]
%\begin{center}
%\includegraphics[scale=0.444]{Diagram.pdf}
%\caption{Coupled mean-field and semiclassical limit, with the constraint $N=\e^{-3}$.}
%\label{fig:diagram}
%\end{center}
%\end{figure}

A different approach consists in considering the Hartree equation as a bridge between the $N$-body Schr\"odinger dynamics and the Vlasov equation, as pictured in  \eqref{fig:diagram}.  
The Hartree equation reads
\be\label{eq:Hartree}
i\e\partial_t\ \omega_{N,t}=\left[-\e^2\Delta+V*\varrho_t\,,\,\omega_{N,t}\right]
\ee
where $\omega_{N,t}$ is a nonnegative trace class fermionic operator over $L^2(\R^3)$, and for two operators $A$ and $B$ the standard notation $[A,B]$ stands for the commutator $AB-BA$. More precisely, we say that an operator $\omega_{N,t}$ is fermionic if the bound
\be\label{eq:fermions}
0\leq\omega_{N,t}\leq 1
\ee
hold. Notice moreover that if we assume \eqref{eq:fermions} at time $t=0$, the equation \eqref{eq:Hartree} propagates such a bound for positive times.

Looking at the Hartree equation \eqref{eq:H}, one observes that its solution $\omega_{N,t}$ is still $N$ dependent and one expects it to approach a solution to the Vlasov equation as $N\to\infty$. 
Figure \eqref{fig:diagram} above describes two ways of deriving the Vlasov equation from the many-body dynamics: either one performs a direct limit $N=\e^{-3}\to \infty$, or one first observes that for $N$ large but fixed the many-body Schr\"{o}dinger equation is approximated by the Hartree equation and then performs the semiclassical limit $\e\to 0$ recovering the Vlasov equation
\[\left\{\begin{array}{l}
\partial_t W_{t}+v\cdot\nabla_x W_{t}+(V*\varrho_t)\cdot\nabla_v W_{t}=0\,, \\\\
\varrho_t(x)=\int W_{t}(x,v)\,dv\,.
\end{array}
\right.
\]

To compare these objects, namely an operator $\omega_{N,t}$ on $L^2(\R^3)$ (solution to the Hartree equation \eqref{eq:Hartree}) and a function $W_{N,t}$ defined on the phase space $\R^3\times\R^3$ (solution to the Vlasov system \eqref{eq:VPS}), we introduce two standard tools in semiclassical analysis: the Wigner transform of an operator $\omega_{N,t}$ defined as
\be\label{eq:wigner}
W_{N,t}(x,v)=\left(\frac{\e}{2\pi}\right)^3\int \omega_{N,t}\left(x+\e\frac{y}{2};x-\e\frac{y}{2}\right)\,e^{-iv\cdot y}dy\,,
\ee
and its inverse, known as Weyl quantization, given by 
\be\label{eq:weyl}
\omega_{N,t}(x;y)=N\int W_{N,t}\left(\frac{x+y}{2},v\right)\,e^{iv\cdot\frac{(x-y)}{\e}}\,dv\,.
\ee
We recall that the Wigner transform of a fermionic operator is not always a probability density on the phase space, as in general it is not positive. This issue can be fixed by using the Husimi transform (see \cite{BPSS} for further discussions on this point).

The horizontal line in figure \eqref{fig:diagram} in the regime under consideration has been investigated in \cite{ESY, BPS13}, where it has been shown that for regular interaction potentials the Hartree equation is a good approximation for the many-body Schr\"{o}dinger evolution when considering zero temperature states enjoying a semiclassical structure. More precisely, in \cite{ESY} the convergence of the Schr\"{o}dinger dynamics towards a solution of the Hartree equation in the sense of reduced density matrices has been proved for short times and without a control on the rate of convergence. In \cite{BPS13} the authors prove that such an approximation holds for time intervals of order one and provide explicit estimates on the speed of convergence in terms of the number of particles $N$. This latter has been extended to positive temperature states in \cite{BJPSS} and to fermions with semi-relativistic dispersion relation in \cite{BPS-rel}. For singular potentials of the form $V(x)=|x|^{-\alpha}$, for $\alpha\in(0,1]$, it has been shown in \cite{PRSS, S18} that the Hartree dynamics is still a good approximation of the many-body one, at least for a very special class of initial data, namely translation invariant states. \\
Different regimes have been considered in \cite{BBPPT, BGGM1, BGGM2, BDGM, FK, P, PP}. More precisely,  states confined in a volume of order $O(N)$ have been studied in \cite{BBPPT, P, PP}, while a regime in which the potential energy is sub-leading with respect to the kinetic one has been considered in \cite{BGGM1,BGGM2,BDGM,FK}.   
 
%Looking at the Hartree Eq. \eqref{eq:H}, one observes that its solution $\omega_{N,t}$ is still $N$-dependent and one expects it approaches a solution to the Vlasov equation as $N\to\infty$. To compare these objects, namely an operator $\omega_{N,t}$ on $L^2(\R^3)$ and a function $W_{N,t}$ defined on the phase space $\R^3\times\R^3$, we introduce two standard tools in semiclassical analysis: the Wigner transform of an operator $\omega_{N,t}$ defined as
%%
%\be\label{eq:wigner}
%W_{N,t}(x,v)=\left(\frac{\e}{2\pi}\right)^3\int \omega_{N,t}\left(x+\e\frac{y}{2};x-\e\frac{y}{2}\right)\,e^{-iv\cdot y}dy\,,
%\ee
%% 
%and its inverse, known as Weyl quantization, given by 
%%
%\be\label{eq:weyl}
%\omega_{N,t}(x;y)=N\int W_{N,t}\left(\frac{x+y}{2},v\right)\,e^{iv\cdot\frac{(x-y)}{\e}}\,.
%\ee
%% 
Using \eqref{eq:wigner}, in \cite{LionsPaul, GIMS, MM, FLP} the vertical line in figure \eqref{fig:diagram} has been investigated. The authors prove convergence in weak sense towards the solutions of the Vlasov equation. The analysis in \cite{LionsPaul, FLP} includes the Coulomb potential, but does not provide explicit bounds on the convergence rate, which are fundamental for applications. Indeed, in all relevant situations, the number of particles $N$ is large, but finite. An explicit control on the convergence rate therefore allows to determine how large the system (i.e. the number of particles $N$) should be in order for the Vlasov equation to be a meaningful approximation.  

The paper \cite{APPP} was the first of a long list of references in which this aspect has been tackled. Indeed, in \cite{APPP} Athanassoulis, Paul, Pezzotti and Pulvirenti obtain the convergence of the Wigner transform of a solution to the Hartree equation towards a solution to the Vlasov equation in Hilbert-Schmidt norm with a relative rate $N^{-\frac{2}{21}}$ for $V\in H^1(\R^3)$. In \cite{PP} and \cite{AKN1,AKN2} an expansion of the solution of the Hartree equation in powers of $\e$ has been provided.\\ 
More recently, in the same spirit of  \cite{APPP}, assumptions on the potentials has been relaxed first to interactions $V$ such that $\nabla V\in {\rm Lip}(\R^3)$ \cite{BPSS} and then to inverse power law potentials $V(x)~=~\gamma |x|^{-\alpha}$, for $\alpha\in (0,1/2)$ and $\gamma=\pm 1$ (cf. \cite{S18}). A key ingredient is to consider the problem from the perspective of the Weyl quantization, instead of the one  of the Wigner transform usually adopted in the previous literature. In the same regime, convergence of minimizers of the $N$-particle energy to the mean-field energy has been proven in \cite{FLS, LMT}, respectively for zero and positive temperature. \\
A different approach has been recently proposed by \cite{GolsePaul1} with the introduction of a new pseudo-distance which is reminiscent of the Monge-Kantorovich distance for probability measures in the setting of classical mechanics. Under appropriate conditions on the initial states, such a pseudo-distance metrizes weak convergence. For potentials $V$ such that the force satisfies $\nabla V\in{\rm Lip}(\R^3)$, Golse and Paul prove that the Vlasov equation is a good approximation for the Hartree dynamics when considering a special class of bosonic states, defined through T\"{o}plitz operators in \cite{GolsePaul1} and they are able to consider projection operators in \cite{GolsePaul2} by introducing the notion of quantum empirical measure. In \cite{GolsePaulPulvirenti} the convergence result in \cite{GolsePaul1} is proven to be uniform in the Planck constant. On the same line, \cite{GolsePaul1} has been extended  by Lafl\`eche in \cite{Lafleche} to the case of singular interaction potentials, here included the Coulomb case.

{In this paper we are interested in giving a strong notion of convergence of the Hartree dynamics to the Vlasov-Poisson one for particles obeying the Fermi statistics, exhibiting explicit control on the convergence rate and thus extending and complementing previous results by \cite{LionsPaul, FLP, Lafleche}.
\smallskip

{\it Notations, assumptions and main result. } To state precisely the main result of this paper, we introduce the following notations. For $s\in\N$, let $H^s(\R^3\times\R^3)$ denote the Hilbert space of real-valued functions $f$ on the phase space $\R^3\times\R^3$ with finite norm 
\begin{equation*}
\|f\|_{H^s}^2:=\sum_{|\beta|\leq s}\iint_{\R^3\times\R^3} |\nabla^\beta f(x,v)|^2\,dx\,dv\,,
\end{equation*}
where $\beta$ is a multi-index and $\nabla^\beta$ acts on both $x$ and $v$.  For $s,m\in\N$, let $H^s_m(\R^3\times\R^3)$ denote the Sobolev space $H^s$ weighted with $(1+x^2+v^2)^m$ and define its associated norm
\begin{equation*}
\|f\|^2_{H_m^s}:=\sum_{|\beta|\leq s}\iint_{\R^3\times\R^3} (1+x^2+v^2)^m|\nabla^\beta f(x,v)|^2\,dx\,dv\,.
\end{equation*}
We denote the $m$-th velocity moment associated to a function $f\in L^1(\R^3\times\R^3)$ by
\be\label{eq:moments-0}
\mathcal{M}_m(f)=\iint_{\R^3\times\R^3} |v|^m f(x,v)\,dx\,dv\,.
\ee
Moreover, we denote by  $\varrho_{|[x,{\omega}_{N}]|}$ the function associated with the integral kernel of the operator $|[x,\omega_N]|$
\be\label{eq:density-sc}
\varrho_{|[x,{\omega}_N]|}(x):=|[x,{\omega}_N]|(x;x)\,,\quad \mbox{ for } x\in\R^3\,.
\ee

Moreover, throughout the paper we look at the situation in which a smooth solution to the Vlasov-Poisson equation exists and is unique. This is always the case when the following assumptions on the initial datum $W_N$ are satisfied:
\begin{itemize}
\item[$i)$] $W_N\in L^1\cap L^{\infty}(\R^3\times\R^3)$ and  $\mathcal{M}_m(W_N)<\infty$ for all $m<m_0$, with $m_0>6$. %\textcolor{red}{mettere la def di momento tra le notazioni?}
%%
%\be\label{eq:moments-0}
%\iint_{\R^3\times\R^3}|v|^m\,W_N(x,v)\,dx\,dv<+\infty\quad \mbox{ if } m<m_0\,.
%\ee
%%
\item[$ii)$] For all $R,\,T>0$, 
\be\label{eq:uniq-cond}
\begin{split}
\esssup_{y,\,w}\{|\nabla^k W_N|(y+vt,w)\,:\,|y-x|\leq R,&\,|w-v|\leq R\}\\
&\in\ L^\infty((0,T)\times\R_x^3;L^1(\R_v^3)\cap L^2(\R^3_v))
\end{split}
\ee
for $k=0,1, \dots, 5$.
\item[$iii)$] There exists $C>0$ independent of $N$ such that, for $k=0,\dots,6$, $\|W_N\|_{H_4^k}\leq C\,.$
%
%\be\label{eq:moments-cond}
%\|W_N\|_{H_4^k}\leq C\,.  
%\ee
%
\end{itemize} 

In the same spirit of \cite{BPSS, S19}, the main result of this paper reads

\begin{theorem}\label{thm:tr-HS}
Let $\e=\e_N$ be defined as in \eqref{eq:eps}. Let $\omega_N$ be a sequence of fermionic operators on $L^2(\R^3)$, $0\leq \omega_N\leq 1$, with $\tr\ \omega_N=N$ and with Wigner transform $W_N$ satisfying $i),\,ii),\,iii)$.
%\begin{itemize}
%\item[i)] $W_N\in L^1\cap L^{\infty}(\R^3\times\R^3)$ and  $\mathcal{M}_m(W_N)<\infty$ for all $m<m_0$, with $m_0>6$. %\textcolor{red}{mettere la def di momento tra le notazioni?}
%%%
%%\be\label{eq:moments-0}
%%\iint_{\R^3\times\R^3}|v|^m\,W_N(x,v)\,dx\,dv<+\infty\quad \mbox{ if } m<m_0\,.
%%\ee
%%%
%\item[ii)] For all $R,\,T>0$, 
%%
%\be\label{eq:uniq-cond}
%\begin{split}
%\esssup_{y,\,w}\{|\nabla^k W_N|(y+vt,w)\,:\,|y-x|\leq R,&\,|w-v|\leq R\}\\
%&\in\ L^\infty((0,T)\times\R_x^3;L^1(\R_v^3)\cap L^2(\R^3_v))
%\end{split}
%\ee
%%
%for $k=0,1, \dots, 5$.
%\item[iii)] There exists $C>0$ independent of $N$ such that, for $k=0,\dots,6$, $\|W_N\|_{H_4^k}\leq C\,.$
%%
%%\be\label{eq:moments-cond}
%%\|W_N\|_{H_4^k}\leq C\,.  
%%\ee
%%
%\end{itemize}
%\textcolor{red}{and such that $\tr\ (-\e^2\Delta)\omega_N\leq CN$ for a constant $C$ independent of $N$ ({\it do we need it??})}. 
Let $\omega_{N,t}$ denote the solution of the time-dependent Hartree equation \eqref{eq:Hartree} with initial data $\omega_{N,0}=\omega_N$ and let $\widetilde{W}_{N,t}$ be the solution of the Vlasov-Poisson system  \eqref{eq:VPS} with initial data $\widetilde{W}_{N,0}=W_N$. Let $\widetilde{\omega}_{N,t}$ denote the Weyl quantization of $\widetilde{W}_{N,t}$ and assume that there exist a time $T>0$, a number $p>5$ and a positive constant $C$ such that
\be\label{eq:sc-assumptions}
\sup_{t\in [0,T]}\sum_{i=1}^3\left[ \|\varrho_{|[x_i,\widetilde{\omega}_{N,t}]|}\|_1+\|\varrho_{|[x_i,\widetilde{\omega}_{N,t}}\|_p \right]\leq CN\e\,.
\ee 
Then
\be\label{eq:tr-norm}
\tr\ |\omega_{N,t}-\widetilde{\omega}_{N,t}|\leq C_t\,N\,\e\left[1+\sum_{i=1}^4 \e^i\sup_N\|W_N\|_{H_4^{i+2}}\right]\,.
\ee
where $C_t$ is a constant depending only on the time $t$ and on $\|W_N\|_{H^2_4}$.
%and 
%\be\label{eq:HS-norm}
%\|\omega_{N,t}-\widetilde{\omega}_{N,t}\|_{\rm HS}\leq C_T\,\sqrt{N}\,\e\left[ 1+\sum_{i=1}^4 \e^i\sup_N\|W_N\|_{H_4^{i+2}} \right]
%\ee
\end{theorem}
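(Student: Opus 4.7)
The plan is to run a Grönwall-type argument in trace norm on the difference $\xi_{N,t}:=\omega_{N,t}-\widetilde{\omega}_{N,t}$, following the template of \cite{BPSS, S19} and incorporating the Coulomb-singular commutator estimates of \cite{Lafleche} adapted to the Weyl-quantization setting.

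\textbf{Step 1 (Equation for $\xi_{N,t}$).} First I would derive the equation satisfied by $\widetilde{\omega}_{N,t}$. Using the definition \eqref{eq:weyl} and the Vlasov-Poisson equation, a standard semiclassical calculus computation yields
\begin{equation*}
i\e\partial_t\widetilde{\omega}_{N,t} = \left[-\e^2\Delta + V*\widetilde{\varrho}_t,\,\widetilde{\omega}_{N,t}\right] + \e\,\mathcal{C}_{N,t},
\end{equation*}
where $\mathcal{C}_{N,t}$ is the semiclassical remainder measuring the discrepancy between the Moyal bracket and the Poisson bracket associated with $V*\widetilde{\varrho}_t$ and $\widetilde{W}_{N,t}$. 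Symbolic calculus then delivers the bound $\|\mathcal{C}_{N,t}\|_1\leq CN\e\sum_{i=1}^{4}\e^i\,\|\widetilde{W}_{N,t}\|_{H_4^{i+2}}$, which after dividing by $\e$ produces the $\e$-expansion seen in \eqref{eq:tr-norm}. Subtracting from the Hartree equation \eqref{eq:Hartree} and decomposing
\begin{equation*}
[V*\varrho_t,\omega_{N,t}] - [V*\widetilde{\varrho}_t,\widetilde{\omega}_{N,t}] = [V*\widetilde{\varrho}_t,\xi_{N,t}] + [V*(\varrho_t-\widetilde{\varrho}_t),\omega_{N,t}]
\end{equation*}
produces the evolution equation for $\xi_{N,t}$, with $\xi_{N,0}=0$ by construction.

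\textbf{Step 2 (Unitary conjugation and Duhamel).} Let $\mathcal{U}_t$ be the unitary generated by $-\e^2\Delta + V*\widetilde{\varrho}_t$. Conjugating kills the two commutators $[-\e^2\Delta,\xi_{N,t}]$ and $[V*\widetilde{\varrho}_t,\xi_{N,t}]$ in the equation for $\mathcal{U}_t^*\xi_{N,t}\mathcal{U}_t$; integrating in time, using unitary invariance of the trace norm, and returning to $\xi_{N,t}$ gives
\begin{equation*}
\tr|\xi_{N,t}| \leq \frac{1}{\e}\int_0^t \|[V*(\varrho_s-\widetilde{\varrho}_s),\omega_{N,s}]\|_1\,ds + \int_0^t \|\mathcal{C}_{N,s}\|_1\,ds.
\end{equation*}

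\textbf{Step 3 (The main commutator estimate).} The crux of the proof is controlling $\|[V*(\varrho_s-\widetilde{\varrho}_s),\omega_{N,s}]\|_1$ when $V(x)=\gamma/|x|$. I would split $\omega_{N,s}=\widetilde{\omega}_{N,s}+\xi_{N,s}$. For the piece involving $\widetilde{\omega}_{N,s}$, I write the kernel $(f(x)-f(x'))\widetilde{\omega}_{N,s}(x,x')$ with $f=V*(\varrho_s-\widetilde{\varrho}_s)$, use the representation $f(x)-f(x')=(x-x')\cdot\int_0^1\nabla f(x'+\sigma(x-x'))\,d\sigma$, and bound the trace norm by $\sum_i \|\nabla_i V * (\varrho_s-\widetilde{\varrho}_s)\|_\infty \cdot \|[x_i,\widetilde{\omega}_{N,s}]\|_1$. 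The second factor is bounded by $CN\e$ thanks to assumption \eqref{eq:sc-assumptions}; for the first one I split $\nabla V=-x/|x|^3$ into its $L^{3/2,\infty}$-singular part and its bounded tail, apply Young's inequality, and control $\|\varrho_s-\widetilde{\varrho}_s\|_{L^1\cap L^p}$ by interpolation: $\|\varrho_s-\widetilde{\varrho}_s\|_1\leq\tr|\xi_{N,s}|$, and the $L^p$ norm is controlled via duality against the assumed uniform bound on $\|\varrho_{|[x_i,\widetilde\omega_{N,s}]|}\|_p$ combined with the a priori regularity on $\omega_{N,s}$ inherited from $i)$–$iii)$. The piece involving $\xi_{N,s}$ is handled analogously, factoring the commutator through gradients of $V*(\varrho_s-\widetilde\varrho_s)$ and $[x_i,\xi_{N,s}]$, which produces a term linear in $\tr|\xi_{N,s}|$ after extra manipulations.

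\textbf{Step 4 (Grönwall).} The combined estimates lead to an integral inequality of the form
\begin{equation*}
\tr|\xi_{N,t}| \leq C_t\int_0^t \tr|\xi_{N,s}|\,ds \;+\; CN\e\,\Big[1+\sum_{i=1}^{4}\e^i\sup_N\|W_N\|_{H_4^{i+2}}\Big],
\end{equation*}
and Grönwall's lemma yields \eqref{eq:tr-norm}. The hard part is Step 3: the $1/\e$ prefactor coming from Duhamel must be compensated exactly by an $\e$-gain in the commutator estimate, and for the singular Coulomb potential this gain is not automatic—it is precisely the semiclassical commutator bound \eqref{eq:sc-assumptions} on $\widetilde{\omega}_{N,s}$ which provides the missing $\e$ and keeps the final constant independent of $N$.
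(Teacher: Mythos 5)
Your Steps 1--2 and 4 capture the right high-level template, but Step 3 has a genuine gap that is precisely the point where the paper introduces its main new idea, and your choice of reference dynamics in Step 2 introduces an unnecessary extra difficulty.

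\textbf{The main gap.} You propose to bound $\tr\,|[V*(\varrho_s-\widetilde{\varrho}_s),\widetilde{\omega}_{N,s}]|$ by factoring it as $\sum_i\|\nabla_i V*(\varrho_s-\widetilde{\varrho}_s)\|_\infty\cdot\tr\,|[x_i,\widetilde{\omega}_{N,s}]|$ and then controlling $\|\nabla V*(\varrho_s-\widetilde{\varrho}_s)\|_\infty$ by Young's inequality. Since $\nabla V\sim|x|^{-2}\in L^{3/2,\infty}$, this requires $\varrho_s-\widetilde{\varrho}_s$ in $L^1\cap L^q$ for some $q>3$. But the only norm of $\varrho_s-\widetilde{\varrho}_s$ you can feed into a Gr\"{o}nwall inequality is $\|\varrho_s-\widetilde{\varrho}_s\|_{L^1}\leq\tr\,|\xi_{N,s}|$; the $L^q$ norm of the density difference is not controlled by the trace norm of $\xi_{N,s}$, and your appeal to ``duality against $\|\varrho_{|[x_i,\widetilde{\omega}_{N,s}]|}\|_p$'' conflates two unrelated densities: $\varrho_{|[x_i,\widetilde{\omega}_{N,s}]|}$ is the density of the commutator, not the density difference $\varrho_s-\widetilde{\varrho}_s$. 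This is exactly the obstacle that distinguishes the Coulomb case from the $\nabla V\in{\rm Lip}$ case treated in \cite{BPSS}, and the paper resolves it by a different mechanism: the Fefferman--de la Llave representation $\frac{1}{|x-y|}=C\int_0^\infty\int\frac{1}{r^5}\chi_{(r,z)}(x)\chi_{(r,z)}(y)\,dz\,dr$ (Lemma \ref{lemma:FDL}), which \emph{decouples} the density difference from the commutator inside the trace norm. After Lemma \ref{lemma:dominant}, only $\int|\varrho_s(y)-\widetilde{\varrho}_s(y)|g_r(y)\,dy\leq\|g_r\|_\infty\|\varrho_s-\widetilde{\varrho}_s\|_{L^1}$ is needed, so only the $L^1$ norm of the density difference appears, while the $L^1$ and $L^p$ norms in \eqref{eq:sc-assumptions} are spent on the commutator $\tr\,|[\chi_{(r,z)},\widetilde{\omega}_{N,s}]|$ via the Hardy--Littlewood maximal function estimate of Lemma \ref{lemma:tr1}, whose sharp $r^{3/2-3\delta}$ power is what tames the $r^{-5}$ singularity at $r=0$. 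Without this decoupling your Step 3 does not close.

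\textbf{Secondary issue: the choice of reference dynamics.} You conjugate by the unitary generated by $-\e^2\Delta+V*\widetilde{\varrho}_t$ (the Vlasov potential) and correspondingly decompose $[V*\varrho_t,\omega_{N,t}]-[V*\widetilde{\varrho}_t,\widetilde{\omega}_{N,t}]=[V*\widetilde{\varrho}_t,\xi_{N,t}]+[V*(\varrho_t-\widetilde{\varrho}_t),\omega_{N,t}]$. This leaves you, after splitting $\omega_{N,s}=\widetilde{\omega}_{N,s}+\xi_{N,s}$, with the term $[V*(\varrho_s-\widetilde{\varrho}_s),\xi_{N,s}]$, and you have no semiclassical structure on $\xi_{N,s}$ (the assumption \eqref{eq:sc-assumptions} concerns $\widetilde{\omega}_{N,s}$ only; neither $\tr\,|[x_i,\xi_{N,s}]|\lesssim\e\,\tr\,|\xi_{N,s}|$ nor the $L^p$ bound on $\varrho_{|[x_i,\xi_{N,s}]|}$ is available). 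The paper avoids this by conjugating with the \emph{Hartree} Hamiltonian $h_H(t)=-\e^2\Delta+V*\varrho_t$, which corresponds to the other decomposition $[V*\varrho_t,\xi_{N,t}]+[V*(\varrho_t-\widetilde{\varrho}_t),\widetilde{\omega}_{N,t}]$; then the only commutator left after conjugation has $\widetilde{\omega}_{N,s}$ inside (Lemma \ref{lemma:traceH-V}), matching the data you actually have. Finally, your Step 1 treats the Weyl-quantized remainder $\mathcal{C}_{N,t}$ abstractly by ``symbolic calculus''; the paper's Proposition \ref{prop:error-term} shows this remainder $B_t$ must itself be estimated using the Fefferman--de la Llave formula (to differentiate $U_t=V*\widetilde{\varrho}_t$ and transfer derivatives to $\widetilde{\varrho}_t$), so the singular Coulomb structure reappears in the error term as well, not only in the leading one.
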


\noindent We comment on the meaning of the Theorem \ref{thm:tr-HS} above and discuss the assumptions. 
\begin{itemize}
\item We notice that because of the normalization $\tr\ \omega_{N,t}=N$, Eq. \eqref{eq:tr-norm} proves that $\omega_{N,t}$ and $\widetilde{\omega}_{N,t}$ are close as $N$ is large. Indeed, the trace norm of $\omega_{N,t}-\widetilde{\omega}_{N,t}$ is smaller than the trace of $\omega_{N,t}$ and $\widetilde{\omega}_{N,t}$. 

\item We observe that assumptions $ii)$ and $iii)$ restrict our analysis to the case of fermionic mixed states, i.e. equilibrium states at positive temperature. Indeed, at zero temperature the states of a system at equilibrium can be approximated by Slater determinants, whose Wigner transforms are in general not smooth. To support this statement, we analyse the following prototype example. Consider a system of $N$ non interacting fermions in a box $\Lambda$ of volume of order one with periodic boundary conditions. This choice makes the system translation invariant, thus guaranteeing that fermions are uniformly distributed in the box with density of particles equal to $N$. We observe that the fact that fermions are not interacting drastically simplifies the picture. Indeed, in this framework the ground state of the system is exactly a Slater determinant whose Wigner transform can be computed explicitly:
\be\label{eq:wigner-pure}
W_N(x,v)=\dfrac{1}{N}\mathbf{1}_{\{(x,v)\in\Lambda\times\R^3\ :\ |v|\leq C\, N^{1/3}\}}
\ee
Therefore, in the case of free fermions in a box of order one with periodic boundary conditions, the Wigner transform of the ground state of the system behaves as the characteristic function of a set and surely does not satisfy the hypotheses of Theorem \ref{thm:tr-HS}. \\
If we consider the interacting picture in which fermions interact through a mean-field potential $V$ and the system is confined by an external potential $V^{\rm ext}$, then the density of particles is not equal to $N$ anymore, but one still expects the Wigner transform associated with the approximation of the ground state of the system to be of the form \eqref{eq:wigner-pure}, where the uniform density $N$ is replaced by a local function of $x$, say $\varrho_{\rm TF}(x)$, that is the minimiser of the Thomas-Fermi energy functional, i.e. 
$$
\varrho_{\rm TF}(x):=\argmin_{\substack{\varrho\in L^1\cap L^{\frac{5}{3}}\\ \|\varrho\|_1=N}}\left\{ \frac{3}{5}c_{\rm TF}\int \varrho(x)^{5/3}dx+\int V^{\rm ext}(x)\varrho(x)dx+\dfrac{1}{2}\iint V(x-y)\varrho(x)\varrho(y)dxdy \right\} 
$$
 where $c_{\rm TF}$ is the Thomas-Fermi constant. 
We therefore conclude that the assumptions of Theorem \ref{thm:tr-HS} are expected to hold for states describing systems of $N$ particles in equilibrium at positive temperature, but do not include pure states, i.e. states at zero temperature (cf. \cite{BPSS} for details). 

\item We comment on the assumptions. Hypotheses $i)$ and $ii)$ (for $k=0,1$) where proven in \cite{LP} to guarantee existence, uniqueness and propagation of moments of a solution to the Vlasov-Poisson system \eqref{eq:VPS}. Hypotheses $ii)$ (for $k=2,\dots,5$) and $iii)$ are crucial to ensure regularity of the solution in $H_4^6$. 
The bounds \eqref{eq:sc-assumptions}  are assumed to hold true at positive time. This is a severe restriction of our result. The quantity \eqref{eq:density-sc} has already played a central role in \cite{BPS13,PRSS,S18}. We recall that the assumption
\be\label{eq:sc-tr}
\|\varrho_{|[x,\omega_{N}]|}\|_{L^1}=\tr\ |[x,\omega_{N}]|\leq CN\e
\ee
is equivalent to ask for initial states enjoying a semiclassical structure. More precisely, the Hartree equation is expected to be a good approximation for the many-body Schr\"{o}dinger dynamics if the kernel of the initial data $\omega_{N}(x;y)$ is concentrated on the diagonal and decays when $|x-y|\gg \e$. Thus, as pointed out in \cite{BPS13}, the kernel of $\omega_N$ should be of the form
\be\label{eq:sc-structure}
\omega_{N}(x;y)\simeq\frac{1}{\e^3}\varphi\left(\frac{x-y}{\e}\right)\varrho\left(\frac{x+y}{2}\right)\,,
\ee 
where $\varrho$ represents the density of particles in space and $\varphi$ fixes the momentum distribution. In particular, \eqref{eq:sc-structure} is compatible with \eqref{eq:sc-tr}.
 See Chapter 6 in\cite{BPS-springer} for a detailed explanation and Section 5 in \cite{BPS13} for the propagation in time of \eqref{eq:sc-tr} in the case of smooth interaction potentials. \\
 To deal with the singularity of the Coulomb potential, we need also to control $L^p$ norms of $\varrho_{|[x,\widetilde{\omega}_{N,t}]|}$ for $p>5$, that means to require more structure on the operator $|[x,\widetilde{\omega}_{N,t}]|$. At the moment, when $p>1$, we do not know how to prove that if $\|\varrho_{|[x,\widetilde{\omega}_N]|}\|_{L^p}\leq CN\e$, this bound is propagated in time by the Vlasov-Poisson evolution. Theorem \ref{thm:tr-HS} is therefore a result conditioned to the uniform in time bounds \eqref{eq:sc-assumptions}.
 Nevertheless, there is one peculiar situation in which it is possible to verify assumption \eqref{eq:sc-assumptions} holds true and it will be discussed in Section \ref{sect:steady-states}.
 
\item  Theorem \ref{thm:tr-HS} is stated for the Hartree dynamics, but it is actually true also for the Hartree-Fock equation
\be\label{eq:HF}
\left\{\begin{array}{l}
i\e\partial_t\ \omega_{N,t}=\left[-\e^2\Delta+V*\varrho_t-X_t\,,\,\omega_{N,t}\right]\,,\\\\
X_t(x;y)=\dfrac{1}{N}V(x-y)\omega_N(x;y)\,,
\end{array}\right.
\ee
for in this setting the exchange term turns out to be sub-leading (cf. \cite{BPS13,PRSS}).  

\end{itemize}

%\begin{corollary}\label{cor:L2}
%Under the assumptions of Theorem \ref{thm:tr-HS}, let $W_0$ be a probability density on $\R^3\times\R^3$  satisfying $i),\,ii),\,iii)$ in Theorem \ref{thm:tr-HS} and 
%\be\label{eq:k-rate}
%\|W_N-W_0\|_{L^1}\leq C\,k_{N,1}\quad \mbox{ and }\quad \|W_N-W_0\|_{L^2}\leq C\,k_{N,2}
%\ee
%for nonnegative sequences $k_{N,j}\to 0$ as $N\to\infty$ for $j=1,2$.\\
%Then there exists $C_T$ depending on $T>0$ and $\|W_N\|_{H_4^2}$ 
%\be\label{eq:L2-est}
%\|W_{N,t}-W_t\|_{L^2}\leq C_T\e\left[ 1+\sum_{i=1}^4 \e^i\sup_N\|W_N\|_{H_4^{i+2}} \right]
%+ C_T(k_{N,1}+k_{N,2})
%\ee
%where $W_{N,t}$ is Wigner transform of the solution to the Hartree equation \eqref{eq:H} with initial data $\omega_N$ and $W_t$ is the solution of the Vlasov-Poisson equation with initial datum $W_0$.
%\end{corollary}

{\it Strategy.} We present here the strategy of the proof in an informal way. We proceed as in \cite{BPSS} by performing a comparison between solutions to the Hartree equation and the Vlasov-Poisson system at the level of operators. This means to consider the Vlasov-Poisson system in its Weyl quantized form (see Eq. \eqref{eq:VP-weyl}). More precisely, we consider a sequence of fermionic operators $\omega_N$, i.e. operators such that $0\leq \omega_N\leq 1$, and look at their evolution first accordingly to the Hartree equation and then according to the Weyl quantized Vlasov-Poisson equation. We denote the solution to the Cauchy problem associated to the Hartree equation with initial data $\omega_N$ by $\omega_{N,t}$, whereas the solution of the Weyl quantized Vlasov-Poisson system with initial data $\omega_N$ is denoted by $\widetilde{\omega}_{N,t}$. We recall that such a solution exists and it is unique under the assumptions of our theorem, due to the result by Lions and Perthame \cite{LP} and its extension to signed measure (see \cite{BPSS} and \cite{S18}). We therefore compare $\omega_{N,t}$ and $\widetilde{\omega}_{N,t}$ looking for a Gr\"{o}nwall type inequality on the trace norm of the operator $\omega_{N,t}-\widetilde{\omega}_{N,t}$. The fist difficulty to cope with is a bound on the kinetic energy associated with the difference $\omega_{N,t}-\widetilde{\omega}_{N,t}$. To overcome this issue, in the same spirit of \cite{BPSS} we introduce a reference frame through a unitary dynamics. This suffices to tackle the problem since in this new reference frame the kinetic energy term 
\[
[\ -\e^2\,\Delta\ ,\ \omega_{N,t}-\widetilde{\omega}_{N,t}\ ]
\]
does not appear anymore (see Lemma \ref{lemma:traceH-V}).  With respect to \cite{BPSS}, where interactions with two bounded derivatives have been treated, the new difficulty here is to deal with the singularity at zero of the Coulomb potential. To face this more fundamental point, we make use of an expression for the Coulomb potential introduced by Fefferman and de la Llave (see Lemma \ref{lemma:FDL}). To be more precise, we employ a smooth version of it (see Eq. \eqref{eq:FDL-smooth}):
\[
\frac{1}{|x-y|}=C\int_0^\infty\int_{\R^3} \frac{1}{r^5}\chi_{(r,z)}(x)\chi_{(r,z)}(y)\,dz\,dr
\]
where $\chi_{(r,z)}(\cdot)$ is a smooth function depending on the distance $|\cdot - z|$ and varying on a scale $r$.
The most important implication of such a rewriting of the Coulomb potential as an integral over
all possible spheres of radius $r\geq 0$ consists in isolating the singularity at zero from all the other terms appearing in the Gr\"{o}nwall like type inequality (most of them produce errors which are estimated in Proposition \ref{prop:error-term}). Hence, the key idea is to cancel part of the interaction by estimating the trace norm of the commutator $[\chi_{(r,z)},\widetilde{\omega}_{N,t}]$, where $\chi_{(r,z)}$ acts as a multiplication operator. In order to absorb the singularity at $r=0$, we need the bound on $\tr\ |[\chi_{(r,z)},\widetilde{\omega}_{N,t}]|$ to be sharp in the $r$ variable (see Lemma \ref{lemma:tr1}). 
{Such a sharp bound forces us to look at quantities (see Eq. \eqref{eq:density-sc}) that are in general not known to be bounded in terms of the assumptions on the initial data. This is the reason why we need to restrict our analysis to a special class of initial data (see Section \ref{sect:steady-states}).  }
\medskip

{\it Plan of the paper.} In Section \ref{sect:lemmas} we present some preliminary estimates which will be used in Section \ref{sect:proof}, where the proof of Theorem \ref{thm:tr-HS} is presented; we conclude with Section \ref{sect:steady-states}, where examples of initial states verifying the assumptions of Theorem \ref{thm:tr-HS} are analysed, namely translation invariant states for the Vlasov-Poisson system and steady states for the attractive Vlasov-Poisson system. Hence, Theorem \ref{thm:tr-HS} shows that the Hartree evolution for non translation invariant states and for non stationary states can be approximated by translation invariant states, or respectively steady states in the attractive case, of the Vlasov-Poisson system, thus showing that the Hartree dynamics leaves the state of the system approximately invariant. 

\section{Auxiliary Lemmas and Propositions}\label{sect:lemmas}

We start by giving a handier expression for the trace norm of the difference of a solution to the Hartree equation \eqref{eq:H-coulomb} and the Weyl transform of the solution to the Vlasov-Poisson system \eqref{eq:VPS}. %The proof of the following Lemma is a simple adaptation of Lemma 2.1 in \cite{S19}.
\begin{lemma}\label{lemma:traceH-V}
Let $\omega_N$ be a sequence of fermionic operators on $L^2(\R^3)$, $0\leq \omega_N\leq 1$, with $\tr\ \omega_N=N$ and denote by $W_N$ its Wigner transform. 
For each $N\in\N$, let $\omega_{N,t}$ be the solution of the time-dependent Hartree equation with Coulomb interaction
\be\label{eq:H-coulomb}
i\,\e\,\partial_t\omega_{N,t}=\left[-\e^2\Delta+\frac{1}{|\cdot|}*\varrho_t\,,\,\omega_{N,t}\right]
\ee
 with initial data $\omega_{N,0}=\omega_N$ and let $\widetilde{W}_{N,t}$ be the solution of the Vlasov-Poisson system  \eqref{eq:VPS} with initial data $\widetilde{W}_{N,0}=W_N$. Moreover, let $\widetilde{\omega}_{N,t}$ denote the Weyl quantization of $\widetilde{W}_{N,t}$. Then the following estimate holds true
\be\label{eq:traceH-V}
\tr \, |\o_{N,t}-\wt\o_{N,t}|\leq \frac{1}{\e}\int_0^t \tr \, \left|\left[\frac{1}{|\cdot|}*(\varrho_s-\wt\varrho_s),\wt\o_{N,s}\right]\right|\,ds + \frac{1}{\e}\int_0^t \tr \, |B_s|\,ds \,,
\ee
where for every $t\geq 0$, $B_t$ is the operator associated with the kernel
\be\label{eq:B}
B_t (x;y) = \left[ \left(\frac{1}{|\cdot|}*\wt{\varrho}_t\right) (x) - \left(\frac{1}{|\cdot|}*\wt{\varrho}_t\right)(y) - \nabla \left(\frac{1}{|\cdot|}*\wt{\varrho}_t\right) \left(\frac{x+y}{2} \right) \cdot (x-y) \right] \wt{\omega}_{N,t} (x;y)
\ee
for all $x,y\in\R^3$.
\end{lemma}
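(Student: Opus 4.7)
The strategy is first to derive an evolution equation for $\widetilde{\omega}_{N,t}$ which differs from the Hartree equation \eqref{eq:H-coulomb} only by the explicit correction $-B_t$, then to subtract and pass to an interaction picture in which the hard-to-control commutator terms drop out.

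First I would show that the Weyl quantization $\widetilde{\omega}_{N,t}$ of the Vlasov--Poisson solution $\widetilde{W}_{N,t}$ satisfies
\[
i\e\,\partial_t\widetilde{\omega}_{N,t}=\Bigl[-\e^2\Delta+\tfrac{1}{|\cdot|}*\widetilde{\varrho}_t\,,\,\widetilde{\omega}_{N,t}\Bigr]-B_t .
\]
This is obtained by direct computation from \eqref{eq:weyl}--\eqref{eq:wigner}: the kernel of $[-\e^2\Delta,\widetilde{\omega}_{N,t}]$ is, up to a factor of $i\e$, the Weyl quantization of the transport term $v\cdot\nabla_x\widetilde{W}_{N,t}$; and an integration by parts in $v$ in \eqref{eq:weyl} identifies the pointwise expression $\nabla\bigl(\tfrac{1}{|\cdot|}*\widetilde{\varrho}_t\bigr)\bigl(\tfrac{x+y}{2}\bigr)\cdot(x-y)\,\widetilde{\omega}_{N,t}(x;y)$ with (a factor of $i\e$ times) the Weyl quantization of the force term $\nabla\bigl(\tfrac{1}{|\cdot|}*\widetilde{\varrho}_t\bigr)\cdot\nabla_v\widetilde{W}_{N,t}$. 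Since the full multiplication commutator $[\tfrac{1}{|\cdot|}*\widetilde{\varrho}_t,\widetilde{\omega}_{N,t}]$ has kernel $\bigl[\bigl(\tfrac{1}{|\cdot|}*\widetilde{\varrho}_t\bigr)(x)-\bigl(\tfrac{1}{|\cdot|}*\widetilde{\varrho}_t\bigr)(y)\bigr]\widetilde{\omega}_{N,t}(x;y)$, the mismatch between this commutator and its first-order Taylor term at the midpoint is by definition exactly $B_t$, see \eqref{eq:B}.

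Subtracting from the Hartree equation \eqref{eq:H-coulomb} and using the algebraic identity $[A,B]-[A',B']=[A,B-B']+[A-A',B']$ with $A=\tfrac{1}{|\cdot|}*\varrho_t$, $A'=\tfrac{1}{|\cdot|}*\widetilde{\varrho}_t$, $B=\omega_{N,t}$, $B'=\widetilde{\omega}_{N,t}$, the difference $R_{N,t}:=\omega_{N,t}-\widetilde{\omega}_{N,t}$ satisfies
\[
i\e\,\partial_t R_{N,t}=[H_{N,t},R_{N,t}]+\Bigl[\tfrac{1}{|\cdot|}*(\varrho_t-\widetilde{\varrho}_t)\,,\,\widetilde{\omega}_{N,t}\Bigr]+B_t ,
\]
where $H_{N,t}:=-\e^2\Delta+\tfrac{1}{|\cdot|}*\varrho_t$ is the (time-dependent) Hartree Hamiltonian. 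I would then introduce the unitary Hartree propagator $\mathcal{U}_{N,t}$ defined by $i\e\,\partial_t\mathcal{U}_{N,t}=H_{N,t}\mathcal{U}_{N,t}$, $\mathcal{U}_{N,0}=I$, and conjugate, $\widehat{R}_{N,t}:=\mathcal{U}_{N,t}^{*}R_{N,t}\mathcal{U}_{N,t}$. A short computation using $i\e\,\partial_t\mathcal{U}_{N,t}^{*}=-\mathcal{U}_{N,t}^{*}H_{N,t}$ shows that the commutator $[H_{N,t},R_{N,t}]$ is exactly cancelled by the conjugation, leaving
\[
i\e\,\partial_t\widehat{R}_{N,t}=\mathcal{U}_{N,t}^{*}\Bigl\{\Bigl[\tfrac{1}{|\cdot|}*(\varrho_t-\widetilde{\varrho}_t)\,,\,\widetilde{\omega}_{N,t}\Bigr]+B_t\Bigr\}\mathcal{U}_{N,t}.
\]

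Since $\widehat{R}_{N,0}=\omega_N-\omega_N=0$, integrating in time, taking the trace norm, and using both the triangle inequality and the invariance of $\tr\,|\cdot|$ under unitary conjugation yields exactly \eqref{eq:traceH-V}. The main technical point is the first step, namely the identification of $B_t$ as the non-semiclassical remainder in the Weyl-quantized Vlasov equation. The key conceptual point, and what makes only the two terms on the right-hand side of \eqref{eq:traceH-V} survive, is the choice of interaction picture: conjugating by the \emph{full} Hartree propagator (not just by the free kinetic one) cancels simultaneously the kinetic commutator $[-\e^2\Delta,R_{N,t}]$, which would be uncontrollable in trace norm, and the mean-field commutator $[\tfrac{1}{|\cdot|}*\varrho_t,R_{N,t}]$, which would otherwise require a separate Grönwall argument already at this stage.
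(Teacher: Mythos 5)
Your proposal is correct and follows essentially the same route as the paper's proof: you derive the Weyl-quantized Vlasov--Poisson equation with the remainder $B_t$ (the paper instead writes it with the force term $A_t$ and only later uses $A_t=[\tfrac{1}{|\cdot|}*\widetilde{\varrho}_t,\widetilde{\omega}_{N,t}]-B_t$, a purely cosmetic difference), subtract from the Hartree equation, conjugate by the Hartree propagator $\mathcal{U}(t;0)$ to cancel $[h_H(t),\omega_{N,t}-\widetilde{\omega}_{N,t}]$, integrate from $\widehat{R}_{N,0}=0$, and take trace norms using unitary invariance. Your closing remark correctly identifies why the full Hartree propagator (rather than the free one) is the right interaction picture; this matches the paper's reasoning.
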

\begin{proof}
We perform the Weyl transform of the Vlasov-Poisson system \eqref{eq:VPS}
\begin{equation*} \left\{
\begin{array}{l}
\partial_t \widetilde{W}_{N,t} + v\cdot\nabla_x \widetilde{W}_{N,t} + E\cdot\nabla_v \widetilde{W}_{N,t} = 0\,,\\\\
E(t,x)=\nabla\left(\frac{\gamma}{|  \cdot  |}*\widetilde{\varrho}_t\right)(x)\,,\\\\
\widetilde{\varrho}_t(x)=\int \widetilde{W}_{N,t}(x,v)\,dv\,,
\end{array}
\right.
\end{equation*}
 and we obtain
\be\label{eq:VP-weyl}
i\,\e\,\partial_t\widetilde{\omega}_{N,t}=[-\e^2\Delta\,,\,\widetilde{\omega}_{N,t}]+A_t
\ee
where $\widetilde{\omega}_{N,t}$ is the Weyl transform of $\widetilde{W}_{N,t}$ and $A_t$ is the operator associated with the kernel
\begin{equation*}
A_t(x;y)=\nabla\left(\frac{1}{|\cdot|}*\widetilde{\varrho}_t\right)\left(\frac{x+y}{2}\right)\cdot(x-y)\,\widetilde{\omega}_{N,t}(x;y)\,.
\end{equation*}
Since we are interested in finding an expression for the difference of the operators $\omega_{N,t}$ and $\widetilde{\omega}_{N,t}$, we look for a Gr\"{o}nwall type estimate and compute the quantity $i\e\partial_t(\omega_{N,t}-\widetilde{\omega}_{N,t})$. To cope with the kinetic terms, we introduce a fictitious unitary dynamics given by the two-parameter group of unitary transformations $\mathcal{U}(t;s)$. Its time evolution is given by
\be\label{eq:unitary}
i\,\e\,\partial_t\,\mathcal{U}(t;s)=h_H(t)\,\mathcal{U}(t;s)\,,
\ee
where $h_H=-\e^2\Delta+\frac{1}{|\cdot|}*\varrho_t$ is the Hartree Hamiltonian.  

We then conjugate the operator $(\omega_{N,t}-\widetilde{\omega}_{N,t})$ by $\mathcal{U}(t;s)$. We observe that such a choice makes $\omega_{N,t}$ play the role of a reference frame, thus we get
\be
\begin{split}
i\,\e\,\partial_t\,\mathcal{U}^*(t;0)\,&(\o_{N,t}-\wt\o_{N,t})\,\mathcal{U}(t;0) \\ = \; &-\mathcal{U}^*(t;0)\,[h_H (t),\o_{N,t}-\wt\o_{N,t}]\,\mathcal{U} (t;0)\\
& +\mathcal{U}^*(t;0)\,([h_H(t),\o_{N,t}]-[-\e^2\,\Delta,\wt\o_{N,t}] - A_t)\,\mathcal{U}(t;0)\\
= \; &\mathcal{U}^*(t;0)\,\left(\left[\frac{1}{|\cdot|}*\varrho_t,\wt\o_{N,t}\right]-A_t\right)\,\mathcal{U}(t;0)\\
= \; &\mathcal{U}^*(t;0)\,\left(\left[\frac{1}{|\cdot|}*(\varrho_t-\wt\varrho_t),\wt\o_{N,t}\right]+ B_t\right)\,\mathcal{U}(t;0)
\end{split}
\ee
where $B_t$ denotes the operator with the integral kernel defined in \eqref{eq:B}. 

Since at time $t=0$ $\omega_{N,0} = \wt{\omega}_{N,0} = \omega_N$, integration in time gives 
\be\label{eq:omega-omegatilde}
\begin{split}
\mathcal{U}^*(t;0)\,(\o_{N,t}-\wt\o_{N,t})\,\mathcal{U}(t;0) = \;& \frac{1}{i\e}\int_0^t \mathcal{U}^*(t;s)\,\left[\frac{1}{|\cdot|}*(\varrho_s-\wt\varrho_s),\wt\o_{N,s}\right]\,\mathcal{U}(t;s)\,ds \\+&\frac{1}{i\e}\int_0^t \mathcal{U}^*(t;s)\,B_s\,\mathcal{U}(t;s)\,ds\,.
\end{split}
\ee
Taking the trace norm in \eqref{eq:omega-omegatilde}, we obtain 
\be\label{eq:trace-norm}
\tr \, |\o_{N,t}-\wt\o_{N,t}|\leq \frac{1}{\e}\int_0^t \tr \, \left|\left[\frac{1}{|\cdot|}*(\varrho_s-\wt\varrho_s),\wt\o_{N,s}\right]\right|\,ds + \frac{1}{\e}\int_0^t \tr \, |B_s|\,ds 
\ee
as desired.
\end{proof}

We will estimate the two terms in the right-hand side of (\ref{eq:trace-norm}) separately, and conclude by applying Gronwall's lemma.
The key idea is to rewrite the Coulomb interaction as an integral over all possible spheres of radius $r\geq 0$, that is the content of the following Lemma first used in \cite{FDLL} by Fefferman and de la Llave to prove stability of matter in the relativistic case. 
\begin{lemma}\label{lemma:FDL}
For every $x\in\R^3$ and $r\geq 0$, let $\mathbf{1}_{\{|x-z|\leq r\}}$ be the characteristic function of the sphere $\{z\in\R^3\,:\,|x-z|\leq r \}$,
then
\be\label{eq:FDL-coulomb}
\frac{1}{|x-y|}=\frac{1}{\pi}\int_0^\infty \int_{\R^3} \frac{1}{r^5}\,\mathbf{1}_{\{|x-z|\leq r\}}\,\mathbf{1}_{\{|y-z|\leq r\}}\,dz\,dr\,. 
\ee
\end{lemma}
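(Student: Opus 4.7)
The plan is to reduce the claim to an elementary one-dimensional computation via Fubini and a scaling argument. First I would apply Fubini to swap the order of integration on the right-hand side and isolate the inner $z$-integral
\begin{equation*}
V(x,y,r) := \int_{\R^3} \mathbf{1}_{\{|x-z|\leq r\}}\,\mathbf{1}_{\{|y-z|\leq r\}}\,dz = |B(x,r)\cap B(y,r)|,
\end{equation*}
which by translation and rotation invariance depends only on $d:=|x-y|$ and $r$. Writing $V(x,y,r)=V(d,r)$, the right-hand side of \eqref{eq:FDL-coulomb} becomes
\begin{equation*}
\frac{1}{\pi}\int_0^\infty \frac{V(d,r)}{r^5}\,dr.
\end{equation*}

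Next I would exploit the scaling of Lebesgue measure: the change of variables $z\mapsto dw$, $r\mapsto ds$ gives $V(d,r)=d^3\,V(1,r/d)$. Substituting $s=r/d$ in the remaining $r$-integral yields
\begin{equation*}
\frac{1}{\pi}\int_0^\infty \frac{V(d,r)}{r^5}\,dr = \frac{1}{\pi\,d}\int_0^\infty \frac{V(1,s)}{s^5}\,ds,
\end{equation*}
so the lemma reduces to showing that the right-hand side has $d$-independent factor equal to $\pi$, i.e.\ $\int_0^\infty s^{-5}V(1,s)\,ds = \pi$.

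Finally, I would insert the classical closed form for the volume of intersection of two equal three-dimensional balls at unit separation: $V(1,s)=0$ for $s<1/2$, and for $s\geq 1/2$
\begin{equation*}
V(1,s) = \frac{\pi}{12}(4s+1)(2s-1)^2,
\end{equation*}
obtained by integrating the spherical cap formula about the axis joining the two centres. A direct evaluation via the substitution $u=2s$ reduces the claim to verifying
\begin{equation*}
\int_1^\infty \frac{(2u+1)(u-1)^2}{u^5}\,du = \int_1^\infty\!\Bigl(\frac{2}{u^2}-\frac{3}{u^3}+\frac{1}{u^5}\Bigr)du = 2-\tfrac{3}{2}+\tfrac{1}{4}=\tfrac{3}{4},
\end{equation*}
from which the prefactors combine to give exactly $\pi$.

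There is no real obstacle in this argument, since everything reduces to elementary geometry once the scaling symmetry is used; the only delicate point is to justify the interchange of integrals, which is immediate by nonnegativity of the integrand and Tonelli's theorem, and to record correctly the intersection-volume formula for two balls. I would in fact just refer to \cite{FDLL} for the derivation and present only the scaling reduction and final one-variable integration for the reader's convenience.
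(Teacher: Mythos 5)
Your proof is correct and matches the paper's stated approach: the paper simply says ``the proof follows by direct inspection,'' and you have supplied exactly that verification, via the scaling reduction to unit separation, the lens-volume formula for the intersection of two equal balls, and the elementary one-variable integral giving $\tfrac{3}{4}$, so that the prefactors $\tfrac{4\pi}{3}\cdot\tfrac{3}{4}=\pi$ cancel the $\tfrac{1}{\pi}$. The Tonelli justification is indeed immediate since the integrand is nonnegative.
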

The proof follows by direct inspection.
Moreover, an elementary modification of the above Lemma shows that if one replaces the characteristic function $\mathbf{1}_{\{|x-y|\leq r\}}$ with a smooth version of it, e.g. $\chi_{(r,x)}(y):=\exp (-|x-y|^2/r^2)$, Eq. \eqref{eq:FDL-coulomb} is still valid, provided the numerical constant in front of the integrals is appropriately modified. \\
From now on, we denote 
$$\chi_{(r,x)}(y):=\exp (-|x-y|^2/r^2)$$ 
thus \eqref{eq:FDL-coulomb} reads 
\be\label{eq:FDL-smooth}
\frac{1}{|x-y|}=\frac{4}{\pi^2}\int_0^\infty \int_{\R^3} \frac{1}{r^5}\,\chi_{(r,z)}(x)\,\chi_{(r,z)}(y)\,dz\,dr\,. 
\ee
\begin{lemma}\label{lemma:dominant}
For every $x\in\R^3$ and $r\geq 0$, let $\chi_{(r,x)}(y):=\exp (-|x-y|^2/r^2)$. Then
\be\label{eq:dom-term}
\tr\ \left|\left[ \frac{1}{|\cdot|}*(\varrho_s-\widetilde{\varrho}_s)\,,\,\widetilde{\omega}_{N,s} \right]\right|\leq C\int_0^\infty\frac{1}{r^5}\iint |\varrho_s(y)-\widetilde{\varrho}_s(y)|\,\chi_{(r,z)}(y)\,\tr\ |[\chi_{(r,z)}\,,\,\widetilde{\omega}_{N,s}]|\,dz\,dy\,dr\,.
\ee
\end{lemma}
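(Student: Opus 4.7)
The plan is to substitute the smooth Fefferman--de la Llave representation \eqref{eq:FDL-smooth} directly into the convolution and then exploit the tensor-product structure $\chi_{(r,z)}(x)\chi_{(r,z)}(y)$ of its integrand. Concretely, the first step is to write
\[
\Big(\tfrac{1}{|\cdot|}*(\varrho_s-\wt\varrho_s)\Big)(x)=\frac{4}{\pi^2}\int_0^\infty\!\!\int_{\R^3}\!\!\int_{\R^3}\frac{1}{r^5}\,\chi_{(r,z)}(x)\,\chi_{(r,z)}(y)\,(\varrho_s-\wt\varrho_s)(y)\,dy\,dz\,dr,
\]
viewing the left-hand side as a multiplication operator acting on the $x$ variable. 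Since $\chi_{(r,z)}(y)(\varrho_s-\wt\varrho_s)(y)$ is a scalar function of the integration variable $y$, and only $\chi_{(r,z)}(x)$ carries the operator dependence, this scalar factor commutes with $\wt\omega_{N,s}$ and can be pulled outside the commutator. This leaves
\[
\Big[\tfrac{1}{|\cdot|}*(\varrho_s-\wt\varrho_s),\wt\omega_{N,s}\Big]=\frac{4}{\pi^2}\int_0^\infty\!\!\int\!\!\int\frac{1}{r^5}\chi_{(r,z)}(y)(\varrho_s-\wt\varrho_s)(y)\,[\chi_{(r,z)},\wt\omega_{N,s}]\,dy\,dz\,dr,
\]
which is a key point: the full convolution kernel singularity has been moved into an integral over single-variable multiplication operators $\chi_{(r,z)}$.

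The second step is to take the trace norm and bring it under the triple integral by the triangle inequality for $\tr|\cdot|$ (justified by Fubini applied to the jointly measurable, eventually nonnegative integrand after bounding in absolute value). This yields
\[
\tr\,\Big|\Big[\tfrac{1}{|\cdot|}*(\varrho_s-\wt\varrho_s),\wt\omega_{N,s}\Big]\Big|\leq C\int_0^\infty\!\!\int\!\!\int\frac{1}{r^5}|\chi_{(r,z)}(y)|\,|\varrho_s-\wt\varrho_s|(y)\,\tr|[\chi_{(r,z)},\wt\omega_{N,s}]|\,dy\,dz\,dr.
\]
Finally, since $\chi_{(r,z)}(y)=\exp(-|y-z|^2/r^2)\geq 0$, the modulus around it can be removed, producing exactly \eqref{eq:dom-term}.

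There is no real analytic obstacle here; the only subtlety is the measure-theoretic justification for exchanging $\tr|\cdot|$ and the $(r,z,y)$ integration, which can be handled by approximating the integral over $(0,\infty)\times\R^3\times\R^3$ by finite Riemann sums, applying the triangle inequality in trace norm at each finite stage, and passing to the limit via monotone convergence on the nonnegative right-hand side. The usefulness of this lemma for the subsequent Gr\"{o}nwall-type argument depends entirely on the sharpness of the bound on $\tr|[\chi_{(r,z)},\wt\omega_{N,s}]|$ in the parameter $r$ near $r=0$, which is the content of the separate lemma announced after \eqref{eq:FDL-smooth} and is where the genuine work of absorbing the Coulomb singularity will take place.
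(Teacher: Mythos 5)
Your proposal is correct and follows exactly the same route as the paper: substitute the smooth Fefferman--de la Llave representation \eqref{eq:FDL-smooth} into the convolution, pull the scalar factor $\chi_{(r,z)}(y)(\varrho_s-\wt\varrho_s)(y)$ out of the commutator so that only the multiplication operator $\chi_{(r,z)}$ remains inside, take the trace norm, and apply the triangle inequality under the $(r,z,y)$ integral, dropping the modulus on the nonnegative Gaussian $\chi_{(r,z)}(y)$. Your added remark on justifying the interchange of $\tr|\cdot|$ and integration (by finite Riemann sums and monotone convergence) is a sensible supplement that the paper leaves implicit.
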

\begin{proof}
%We observe that integrating over $z$ in \eqref{eq:FDL-smooth} yields
%\begin{equation*}
%\int_{\R^3} \chi_{(r,z)}(x)\,\chi_{(r,z)}(y)\,dz=r^3\,\chi_{(\sqrt{2}r,x)}(y)
%\end{equation*}
%for every $x,y\in\R^3$. Hence 
%%
%\be\label{eq:FDL-smooth2}
%\frac{1}{|x-y|}=\frac{4}{\pi^2}\int_0^\infty  \frac{1}{r^2}\,\chi_{(\sqrt{2}r,y)}(x)\,dr\,. 
%\ee
%%
The identity \eqref{eq:FDL-smooth} allows to rewrite the convolution on the l.h.s. of \eqref{eq:dom-term}, for every $x\in\R^3$, as
\begin{equation*}
\frac{1}{|\cdot|}*(\varrho_s-\widetilde{\varrho}_s)(x)=\frac{4}{\pi^2}\int_0^\infty\iint \frac{1}{r^5}\,\chi_{(r,y)}(x)\,\chi_{(r,z)}(y)\,(\varrho_s(y)-\widetilde{\varrho}_s(y))\,dz\,dy\,dr\,.
\end{equation*} 
Therefore, for every $x,x'\in\R^3$, we obtain the following expression for the kernel of the commutator in the l.h.s. of \eqref{eq:dom-term}
\be
\begin{split}
&\left[\frac{1}{|\cdot|}*(\varrho_s-\widetilde{\varrho}_s)\,,\,\widetilde{\omega}_{N,s}\right](x;x')\\ 
&\quad\quad =
\frac{4}{\pi^2}\int_0^\infty \iint \frac{1}{r^5}\,(\varrho_s(y)-\widetilde{\varrho}_s(y))\,\chi_{(r,z)}(y)\,[\chi_{(r,z)}\,,\,\tilde{\omega}_{N,s}](x;x')\,dz\,dy\,dr\,.
\end{split}
\ee
Taking the trace norm of the operator associated with the kernels in the above expression, the bound \eqref{eq:dom-term} holds.
\end{proof}

The following Lemma provides a key estimate to deal with the Coulomb singularity at zero. The proof can be found in \cite{PRSS}, but we report it here for completeness. 
\begin{lemma}[Lemma 3.1 in \cite{PRSS}]
\label{lemma:tr1}
Let $\chi_{(r,z)} (x) := \exp (-|x-z|^2/r^2)$ and, given $T>0$, assume $[x_i,\widetilde{\omega}_{N,t}]$ to be a trace class operator for all $t\in[0,T]$. Then, for all $0<\delta<1/2$ there exists $C>0$ such that the following bound holds point-wise
\be\label{eq:tr1} 
\tr\ \left|[\chi_{(r,z)} , \widetilde{\omega}_{N,t} ] \right| 
\leq C \, r^{\frac{3}{2} - 3\delta} \sum_{i=1}^3 \| \varrho_{|[x_i, \widetilde{\omega}_{N,t}]|} \|_1^{\frac{1}{6}+\delta} \left( \varrho^*_{|[x_i, \widetilde{\omega}_{N,t}]|} (z) \right)^{\frac{5}{6} - \delta}\,,
\ee
where $\varrho^*_{|[x_i,\widetilde{\omega}_{N,t}]|}$ denotes the Hardy-Littlewood maximal function of $\varrho_{|[x_i,\widetilde{\omega}_{N,t}]|}$, defined by 
\be\label{eq:max-def} 
\varrho^*_{|[x_i , \widetilde{\omega}_{N,t}]|} (z) = \sup_{B : z \in B} \frac{1}{|B|} \int_B \varrho_{|[x_i , \widetilde{\omega}_{N,t}]|} (x)\,dx
\ee
where the supremum is taken over all spheres $B$ containing the point $z\in\R^3$, and $\varrho_{|[x_i,\widetilde{\omega}_{N,t}]|}$ is defined in \eqref{eq:density-sc}. 
\end{lemma}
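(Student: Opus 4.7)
The plan is to decompose the commutator in a way that exposes the operators $[x_i,\widetilde{\omega}_{N,t}]$ already present in the target bound, derive two complementary trace-norm estimates for each component, and conclude by the elementary log-convex interpolation $T\le T^\alpha T^{1-\alpha}$. Starting from the fundamental theorem of calculus applied to $\chi_{(r,z)}(x)-\chi_{(r,z)}(y)$, I would write
\[
[\chi_{(r,z)},\widetilde{\omega}_{N,t}]=\sum_{i=1}^3 M_i,\qquad M_i(x;y):=\Phi_i(x,y)\,(x_i-y_i)\,\widetilde{\omega}_{N,t}(x;y),
\]
with $\Phi_i(x,y):=\int_0^1(\partial_i\chi_{(r,z)})(sx+(1-s)y)\,ds$. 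Since $(x_i-y_i)\widetilde{\omega}_{N,t}(x;y)$ is the kernel of $[x_i,\widetilde{\omega}_{N,t}]$, by the triangle inequality $\tr|[\chi_{(r,z)},\widetilde{\omega}_{N,t}]|\le\sum_i\tr|M_i|$ and it suffices to estimate each $\tr|M_i|$.

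For each $i$ I would prove two bounds. The \emph{crude} bound uses only the pointwise estimate $|\partial_i\chi_{(r,z)}|\le C/r$. Representing $\partial_i\chi_{(r,z)}$ via its Fourier transform (which is a Gaussian of scale $1/r$ with $L^1$-norm of order $1/r$) expresses $M_i$ as a superposition of operators of the form $e^{iks X}[x_i,\widetilde{\omega}_{N,t}]e^{ik(1-s)X}$, where the exponentials are unitaries and thus trace-norm preserving. Minkowski's inequality then yields
\[
\tr|M_i|\le\frac{C}{r}\,\tr|[x_i,\widetilde{\omega}_{N,t}]|=\frac{C}{r}\,\|\varrho_{|[x_i,\widetilde{\omega}_{N,t}]|}\|_1.
\]
The \emph{localized} bound exploits the sharper Gaussian inequality $|\partial_i\chi_{(r,z)}(w)|\le Cr^{-1}\chi_{(r,z)}^{1/2}(w)$. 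After polar-decomposing $[x_i,\widetilde{\omega}_{N,t}]=U_i\,|[x_i,\widetilde{\omega}_{N,t}]|$ and routing $\chi_{(r,z)}^{1/2}$-factors adjacent to $|[x_i,\widetilde{\omega}_{N,t}]|^{1/2}$, one produces the trace-class pairing $\chi_{(r,z)}^{1/2}\,|[x_i,\widetilde{\omega}_{N,t}]|\,\chi_{(r,z)}^{1/2}$ whose trace is controlled by the key identity
\[
\tr\bigl(\chi_{(r,z)}\,|[x_i,\widetilde{\omega}_{N,t}]|\bigr)=\int\chi_{(r,z)}(x)\,\varrho_{|[x_i,\widetilde{\omega}_{N,t}]|}(x)\,dx\le C\,r^3\,\varrho^*_{|[x_i,\widetilde{\omega}_{N,t}]|}(z),
\]
where the last inequality comes from the definition of the Hardy--Littlewood maximal function. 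Assembling this with the $Cr^{-1}$ prefactor from the gradient bound yields the localized estimate $\tr|M_i|\le Cr^2\,\varrho^*_{|[x_i,\widetilde{\omega}_{N,t}]|}(z)$.

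The interpolation step is then immediate. For $\delta\in(0,1/2)$ set $\alpha=1/6+\delta$, so that $1-\alpha=5/6-\delta\in(1/3,5/6)$. Combining the two bounds gives
\[
\tr|M_i|\le\Bigl(\tfrac{C}{r}\|\varrho_{|[x_i,\widetilde{\omega}_{N,t}]|}\|_1\Bigr)^{\alpha}\bigl(Cr^2\,\varrho^*_{|[x_i,\widetilde{\omega}_{N,t}]|}(z)\bigr)^{1-\alpha}=C\,r^{2-3\alpha}\,\|\varrho_{|[x_i,\widetilde{\omega}_{N,t}]|}\|_1^{\alpha}\,\varrho^*_{|[x_i,\widetilde{\omega}_{N,t}]|}(z)^{1-\alpha},
\]
and the identity $2-3\alpha=3/2-3\delta$ matches the stated powers. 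Summing over $i=1,2,3$ produces the claim.

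The main obstacle is the localized bound. The multiplication operator $\chi_{(r,z)}^{1/2}$ is not Schatten class on $L^2(\R^3)$, so Hölder's inequality for Schatten norms cannot be applied directly; the $\chi^{1/2}$-factors must be paired symmetrically with $|[x_i,\widetilde{\omega}_{N,t}]|^{1/2}$ to make Hilbert--Schmidt operators whose squared norm is the tractable integral above. An additional difficulty is that the line-integral prefactor $\Phi_i(x,y)$ does not factor as $f(x)g(y)$, since the midpoint $sx+(1-s)y$ can lie near $z$ even when both $x$ and $y$ are far from $z$. This must be addressed either by expanding $\chi_{(r,z)}$ in its Fourier modes (which converts the $s$-integration into a modulation in $k$ and absorbs the $L^1$-norm of $\widehat{\chi}$) or by a Schur-test argument exploiting the Gaussian decay of $\chi_{(r,z)}$ away from $z$.
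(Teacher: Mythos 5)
Your interpolation skeleton (crude bound plus localized bound plus log-convexity) is the right overall shape and the arithmetic on the exponents is correct, and the crude bound via the Fourier transform of $\partial_i\chi_{(r,z)}$ is fine. The gap is exactly where you flag it: the localized bound is never proved. Your decomposition uses the fundamental theorem of calculus on $\chi$ itself, producing the prefactor $\Phi_i(x,y)=\int_0^1(\partial_i\chi_{(r,z)})(sx+(1-s)y)\,ds$, which does not factor as (multiplication operator on the left)$\times[x_i,\widetilde{\omega}_{N,t}]\times$(multiplication operator on the right). Without such a factorization you cannot route $\chi^{1/2}$-factors next to $|[x_i,\widetilde{\omega}_{N,t}]|^{1/2}$, so the Hilbert--Schmidt pairing and hence the bound $\tr|M_i|\le Cr^2\varrho^*(z)$ never materialize. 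Neither of your suggested fixes closes this: expanding $\chi_{(r,z)}$ in Fourier modes replaces $\partial_i\chi$ by modulation unitaries $e^{ik\cdot(sX)}$, which preserve trace norm but destroy the spatial localization around $z$ needed to reach the maximal function (this route only re-derives the crude bound); and a Schur-test argument controls the operator norm of an integral operator, which is the wrong Schatten class for the estimate you need.

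The paper resolves this by a different splitting of the commutator: instead of $\chi(x)-\chi(y)=\int_0^1\frac{d}{ds}\chi(sx+(1-s)y)\,ds$, it writes $\chi(x)-\chi(y)=\int_0^1\frac{d}{ds}\big[\chi(x)^s\chi(y)^{1-s}\big]\,ds$. Because $\chi_{(r,z)}$ is a Gaussian, $\log\chi_{(r,z)}$ is a polynomial in $x-z$, so this $s$-derivative produces terms of the form $\chi_{(r/\sqrt{s},z)}(x)\cdot\frac{(x-z)_j}{r^2}\cdot[x_j,\widetilde{\omega}_{N,t}]\cdot\chi_{(r/\sqrt{1-s},z)}(y)$ --- a genuinely factorized product of multiplication operators sandwiching $[x_j,\widetilde{\omega}_{N,t}]$. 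The spectral decomposition of $[x_j,\widetilde{\omega}_{N,t}]$ and Cauchy--Schwarz then reduce everything to integrals of $\varrho_{|[x_j,\widetilde{\omega}_{N,t}]|}$ against the localized Gaussians, and the interpolation is performed \emph{inside} the $s$-integral. Note also that the rescaled widths $r/\sqrt{s}$, $r/\sqrt{1-s}$ blow up as $s\to 0,1$, which is exactly why the $s$-integral only converges for $\theta<2/3$, forcing $\delta>0$ and the exponent $3/2-3\delta$ rather than the sharp endpoint $r^2$; in other words, the pointwise localized bound $\tr|M_i|\le Cr^2\varrho^*(z)$ that you posit as one leg of the interpolation is never actually established and is out of reach of this method.
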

\begin{proof}
We consider the commutator $[\chi_{(r,z)}\,,\,\widetilde{\omega}_{N,t}]$ and we write it as
\[
[\chi_{(r,z)}\,,\,\widetilde{\omega}_{N,t}]=\sum_{j=1}^3\mathcal{I}_j+\mathcal{J}_j\,,
\]
where $\mathcal{I}_j$ and $\mathcal{J}_j$ are defined as follows
\[
\begin{split}
&\mathcal{I}_j=-\int_0^1\chi_{(r/\sqrt{s},z)}(x)\,\dfrac{(x-z)_j}{r^2}\,[x_j\,,\,\widetilde{\omega}_{N,t}]\,\chi_{(r/\sqrt{1-s},z)}(x)\,ds\\
&\mathcal{J}_j=-\int_0^1\chi_{(r/\sqrt{s},z)}(x)\,[x_j\,,\,\widetilde{\omega}_{N,t}]\,\dfrac{(x-z)_j}{r^2}\,\chi_{(r/\sqrt{1-s},z)}(x)\,ds
\end{split}
\]
for $j=1,2,3$. As the two terms can be treat in the same way, we focus on $\mathcal{I}_j$.
By assumption, $[x_j,\widetilde{\omega}_{N,t}]$ is a trace class operator and therefore it has a spectral decomposition. Let $\{f_k\}_k$ be an orthonormal system in $L^2(\R^3)$ and $\{\alpha_k\}_k$ be the associated eigenvalues, where $\alpha_k\in\R$ for all $k$. Then
\be\label{eq:spectral-dec}
[x_j\,,\,\widetilde{\omega}_{N,t}]=i\sum_k \alpha_k |\,f_k\rangle\langle f_k\,|
\ee
Eq. \eqref{eq:spectral-dec} and the definition of trace norm then leads to
\[
\begin{split}
\tr\ |\mathcal{I}_j|&\leq  \dfrac{1}{r}\sum_k |\alpha_k|\int_0^1 \dfrac{1}{\sqrt{s}}\tr\ \left|\,\left|\chi_{(r/\sqrt{s},z)}(x)\dfrac{\sqrt{s}|x-z|}{r}\,f_k\right\rangle\left\langle \chi_{(r/\sqrt{1-s},z)}(x)\,f_k\,\right|\,\right| \,ds\\
&\leq \dfrac{1}{r}\int_0^1 \dfrac{1}{\sqrt{s}}\left(\sum_k |\alpha_k|\left\|\chi_{(r/\sqrt{s},z)}(x)\dfrac{\sqrt{s}|x-z|}{r}f_k\right\|_2^2\right)^{\frac{1}{2}}\left(\sum_k|\alpha_k|\left\|\chi_{(r/\sqrt{1-s},z)}(x)f_k\right\|_2^2\right)^{\frac{1}{2}}
\end{split}
\]
where in the last line we used Cauchy-Schwarz inequality.\\
By using the definition of the kernel of the multiplication operator $\chi_{(r/\sqrt{1-s},z)}(x)$ and again the spectral decomposition \eqref{eq:spectral-dec}, we get the following bounds: on the one hand we have
\be\label{eq:bound-max-funct}
\begin{split}
\sum_k|\alpha_k|\left\|\chi_{(r/\sqrt{1-s},z)}(x)\,f_k \right\|_2^2&=\int_0^1 \int_{B(z,\sqrt{r^2\log(1/t)/2(1-s)})}\varrho_{|[x_j\,,\,\widetilde{\omega}_{N,t}]|}(x)\,dx\,dt\\
&\leq \dfrac{C\,r^{3}}{(1-s)^{3/2}}\varrho^*_{|[x_j,\widetilde{\omega}_{N,t}]|}(z)\,,
\end{split}
\ee
where $C$ is a positive constant, $B(z,\sqrt{r^2\log(1/t)/2(1-s)})$ is the ball centred at $z$ with radius $\sqrt{r^2\log(1/t)/2(1-s)}$ and $\varrho^*_{|[x_j,\widetilde{\omega}_{N,t}]|}$ is the Hardy-Littlewood maximal function of $\varrho_{|[x_j,\widetilde{\omega}_{N,t}]|}$ defined in \eqref{eq:max-def}.\\
On the other hand we have
\be\label{eq:tr1-L1}
\sum_k |\alpha_k|\left\| \chi_{(r/\sqrt{1-s},z)}(x) f_k \right\|_2^2\leq C\sum_k|\alpha_k|\\
=\tr\ |[x_j,\widetilde{\omega}_{N,t}]|=C\|\varrho_{|[x_j,\widetilde{\omega}_{N,t}]|}\|_{L^1}\,.
\ee
Interpolating between \eqref{eq:bound-max-funct} and \eqref{eq:tr1-L1}, we obtain
\[
\sum_k|\alpha_k|\left\| \chi_{(r/\sqrt{s},z)}(x)\frac{\sqrt{s}|x-z|}{r} f_k\right\|_2^2\leq C\dfrac{r^{3\theta}\|\varrho_{|[x_j,\widetilde{\omega}_{N,t}]|}\|_{L^1}^{1-\theta}}{s^{\frac{3}{2}\theta}}\left(\varrho^*_{|[x_j,\widetilde{\omega}_{N,t}]|}(z)\right)^{\theta}
\]
that, with the choice $\theta=\frac{2}{3}-2\delta$ yields the desired bound.
\end{proof}

\begin{proposition}\label{prop:error-term}
Let $B_t$ be the operator associated with the kernel \eqref{eq:B}. Then, there exists a constant $C>0$ depending on $\|\widetilde{W}_{N,t}\|_{H^2_4}$, $\|\widetilde{\rho}_t\|_{L^1}$ and $\|\nabla^2\widetilde{\rho}_t\|_{L^\infty}$ such that
\be
\tr\ |B_t|\leq C\,N\,\e^2\,\left(1+\sum_{k=1}^4 \e^k\|\widetilde{W}_{N,t}\|_{H_4^{k+2}}\right)\,.
\ee
\end{proposition}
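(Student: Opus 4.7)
The plan is to extract cancellations from the bracket factor of $B_t$ by Taylor-expanding the mean-field potential $U_t := \frac{1}{|\cdot|}*\wt\varrho_t$ at the midpoint $z := (x+y)/2$, and then trading each resulting $(x-y)$ factor for a velocity derivative of $\wt W_{N,t}$. Setting $h := (x-y)/2$ and using the antisymmetry $h\to -h$ of the expression $U_t(z+h)-U_t(z-h)-2\nabla U_t(z)\cdot h$, every even-order Taylor term cancels, leaving
\[
U_t(z+h)-U_t(z-h)-2\nabla U_t(z)\cdot h \;=\; \sum_{j=1}^{M}\frac{2}{(2j+1)!}\nabla^{2j+1}U_t(z)[h^{\otimes(2j+1)}]+R^{(M)}_t(z,h),
\]
with $M$ chosen so that the integral remainder $R^{(M)}_t$, which is pointwise of order $|h|^{2M+3}$ times $\nabla^{2M+3}U_t$, matches the last term of the claimed bound.

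For the conversion step I exploit the elementary identity
\[
(x-y)_j\,\wt\omega_{N,t}(x;y)=i\e\,N\int\partial_{v_j}\wt W_{N,t}\!\left(\tfrac{x+y}{2},v\right)e^{iv\cdot(x-y)/\e}\,dv,
\]
obtained by one integration by parts in $v$ inside the Weyl-quantization formula \eqref{eq:weyl}. Applied $2j+1$ times, this identity turns each Taylor term above into $(i\e)^{2j+1}$ times the Weyl quantization of the symbol $\nabla^{2j+1}U_t(x)\cdot\nabla_v^{\otimes(2j+1)}\wt W_{N,t}(x,v)$, and $R^{(M)}_t$ is recast analogously. Consequently $B_t$ is represented as a finite sum of Weyl-quantized symbols, each carrying an explicit power of $\e$.

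At this point I invoke a semiclassical trace-norm bound of the form $\tr|\text{Op}^W(f)|\leq C\,N\,\|f\|_\star$, where $\|\cdot\|_\star$ is a weighted Sobolev norm involving a few derivatives together with the moment weight $(1+|x|^2+|v|^2)^2$, in the spirit of the companion estimates of \cite{BPSS,PRSS}. Applied to the symbols produced above, this yields contributions of size $N\e^{2j+1}\|\nabla^{2j+1}U_t\|_{L^\infty}\|\wt W_{N,t}\|_{H^{k+2}_{4}}$. The $L^\infty$-control of $\nabla^k U_t$ is obtained by transferring all derivatives onto the density, writing $\nabla^k U_t=\frac{1}{|\cdot|}*\nabla^k\wt\varrho_t$, splitting the Coulomb kernel into its near- and far-field pieces $\frac{1}{|\cdot|}\mathbf{1}_{\{|\cdot|\leq 1\}}+\frac{1}{|\cdot|}\mathbf{1}_{\{|\cdot|>1\}}$, and applying Young's inequality to deduce
\[
\|\nabla^k U_t\|_{L^\infty}\leq C\bigl(\|\nabla^k\wt\varrho_t\|_{L^1}+\|\nabla^k\wt\varrho_t\|_{L^p}\bigr),\qquad p>3.
\]
These quantities are in turn bounded by $\|\wt W_{N,t}\|_{H^k_4}$ via $\wt\varrho_t(x)=\int\wt W_{N,t}(x,v)\,dv$ and Cauchy--Schwarz against the moment weight $(1+|v|^2)^2$. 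Summing the contributions from $j=1,\ldots,4$ and the remainder, and absorbing the factors that depend only on $\|\wt\varrho_t\|_{L^1}$, $\|\nabla^2\wt\varrho_t\|_{L^\infty}$ and $\|\wt W_{N,t}\|_{H^2_4}$ into the constant $C$, gives the asserted estimate.

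The main technical obstacle is the interplay between these two Sobolev estimates. On one hand, the trace-norm bound on Weyl quantizations must be sharp enough in the number of required derivatives so that the final Sobolev index is exactly $H^{k+2}_4$ (and not, say, $H^{k+4}_4$), which is what makes the bound usable inside the Gr\"onwall argument for Theorem \ref{thm:tr-HS}. On the other hand, the Coulomb kernel does not pass $L^\infty$-bounds from $L^\infty$-bounds on $\wt\varrho_t$, so the two-scale splitting above is essential in order to stay within the regularity assumptions $\wt\varrho_t\in L^1$ and $\nabla^2\wt\varrho_t\in L^\infty$ on which the constant $C$ is allowed to depend.
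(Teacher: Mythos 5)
Your blueprint — Taylor-expand $U_t$ at the midpoint, trade $(x-y)$ factors for $\e\nabla_v$ via the Weyl formula, and then invoke a Weyl trace-norm estimate — is a plausible-looking plan, but it conflicts with the very precise dependence of the constant claimed in the proposition, and it misidentifies where the powers of $\e$ and the Sobolev indices $H_4^{k+2}$ actually come from.

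\smallskip
\textbf{(1) The higher-order Taylor expansion forces illegal dependence on $\wt\varrho_t$.} Your antisymmetrized expansion produces terms $\nabla^{2j+1}U_t(z)[h^{\otimes(2j+1)}]$ with $j=1,\dots,4$, i.e.\ up to $\nabla^9 U_t$, plus a remainder of comparable order. Controlling $\|\nabla^k U_t\|_{L^\infty}$ by Young's inequality as you propose forces $\|\nabla^k\wt\varrho_t\|_{L^1}+\|\nabla^k\wt\varrho_t\|_{L^p}$ into the estimate, for $k$ up to at least $9$. But the proposition asserts that $C$ depends only on $\|\wt\varrho_t\|_{L^1}$, $\|\nabla^2\wt\varrho_t\|_{L^\infty}$ and $\|\wt W_{N,t}\|_{H^2_4}$; no higher derivative of the density (and, after your further reduction, a second copy of higher Sobolev norms of $\wt W_{N,t}$) is allowed. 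The paper avoids this by never taking more than two derivatives of $U_t$: it writes the bracket exactly as a double $(\lambda,\mu)$-integral of $\nabla^2 U_t(\cdot)(x-x')_i(x-x')_j$ (a Taylor \emph{integral} remainder, not a pointwise expansion at $z$), then uses the Fefferman--de la Llave formula to move the two derivatives onto $\wt\varrho_t$. For small radii $r$ the Gaussian $\chi_{(r,\cdot)}$ is integrated in $y$, producing $r^3$ to kill the singularity and leaving $\|\nabla^2\wt\varrho_t\|_{L^\infty}$; for large $r$ the two derivatives are integrated by parts \emph{back onto} $\chi_{(r,\cdot)}$, producing decay in $r$ and leaving only $\|\wt\varrho_t\|_{L^1}$. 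Your two-scale Young splitting achieves the far-field piece with $\|\nabla^k\wt\varrho_t\|_{L^1}$ instead of $\|\wt\varrho_t\|_{L^1}$, which is precisely what must be avoided.

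\smallskip
\textbf{(2) The powers of $\e$ and the indices $H_4^{k+2}$ come from a different mechanism.} Your expansion produces only odd powers $N\e^{2j+1}$, whereas the proposition has contributions $N\e^{2+k}\|\wt W_{N,t}\|_{H_4^{k+2}}$ for $k=0,\dots,4$, i.e.\ consecutive powers $\e^2,\dots,\e^6$. In the paper this whole ladder comes from a single Cauchy--Schwarz step, $\tr|B_t|\leq\|(1-\e^2\Delta)^{-1}(1+x^2)^{-1}\|_{\rm HS}\,\|(1+x^2)(1-\e^2\Delta)B_t\|_{\rm HS}$, with the first factor $\leq C\sqrt N$ and the $(1-\e^2\Delta)$ in the second generating the seven kernels $\wt B_1,\dots,\wt B_7$; the $\e^2$-weighted Laplacian hits the midpoint variable and the phase $e^{iv\cdot(x-x')/\e}$, and this — not a higher Taylor order — is what produces the $v^2$, $\nabla_x$, and extra $\e$ factors and hence the shifted weighted Sobolev norms. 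Your abstract $\tr|\mathrm{Op}^W(f)|\leq CN\|f\|_\star$ is not incorrect as a folklore statement, but without specifying $\|\cdot\|_\star$ exactly (how many $x$- and $v$-derivatives, which weights) you cannot conclude the index $H_4^{k+2}$ rather than a worse one, and you also lose the near/far splitting in $r$ that the kernel-level Hilbert--Schmidt computation performs.

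\smallskip
In short, the idea of converting $(x-y)$ into $\e\nabla_v$ is the same, but the place the $\e$-powers come from is different, and the regularity bookkeeping on $\wt\varrho_t$ is where your argument genuinely breaks: you cannot meet the constant's dependence as stated if you Taylor-expand $U_t$ past second order.
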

Before giving the proof, we remark that the objects on which the constant $C$ depends on are bounded by standard regularity theory for the Vlasov-Poisson system (cf. for instance \cite{Golse}).
\begin{proof}
To bound the trace norm of $B_t$ we introduce the identity operator 
\begin{equation*}
\mathbf{1}=(1-\eps^2\Delta)^{-1} (1+x^2)^{-1}(1+x^2) (1-\eps^2 \Delta).
\end{equation*} 
By applying Cauchy-Schwarz inequality we have
\be\label{eq:tr-B} 
\tr \, |B_t| \leq \| (1-\eps^2\Delta)^{-1} (1+x^2)^{-1} \|_{\rm HS} \, \| (1+x^2) (1-\eps^2 \Delta) B_t \|_{\rm HS}
\ee
We notice that for some $C>0$ the following bound holds
\begin{equation*}
\| (1-\eps^2\Delta)^{-1} (1+x^2)^{-1} \|_{\rm HS} \leq C \sqrt{N} 
\end{equation*}
where we have used the explicit form of the kernel of the operator $(1-\e^2\Delta)^{-1}$ and the fact that $\e^3=N$.\\
We denote  by $U_t$ the convolution of the interaction with the spatial density at time $t$
\be\label{eq:convolution}
U_t:= \frac{1}{|\cdot|}* \widetilde{\varrho}_t.
\ee
We introduce the notation $$\widetilde{B} := (1-\eps^2 \Delta) B_t$$ and observe that the kernel of $\widetilde{B}$ reads 
\be\label{eq:termsB}
\widetilde{B} (x;x'):=\sum_{j=1}^7 \widetilde{B}_j (x;x')  
\ee
where
{\small
\begin{align*}
\widetilde{B}_1 &(x;x')\\
 = &N \left[U_t(x) - U_t(x') - \nabla U_t\left(\dfrac{x+x'}{2}\right)\cdot(x-x')\right]\int \widetilde{W}_{N,t}\left(\dfrac{x+x'}{2},v\right)e^{i\,v\cdot\frac{(x-x')}{\e}}dv; \displaybreak[0]  \\
\widetilde{B}_2 &(x;x')\\
 = &- N\e^2\left[\Delta U_t(x) - \dfrac{1}{4}\Delta \nabla U_t\left(\dfrac{x+x'}{2}\right)\cdot(x-x') - \dfrac{1}{2} \Delta U_t \left(\dfrac{x+x'}{2}\right)\right] \int \widetilde{W}_{N,t}\left(\dfrac{x+x'}{2},v\right)e^{i\,v\cdot\frac{(x-x')}{\e}}dv; \displaybreak[0]\\
\widetilde{B}_3 &(x;x')\\
 = &- \frac{N\e^2}{4} \left[U_t(x) - U_t(x') - \nabla U_t\left(\dfrac{x+x'}{2}\right)\cdot(x-x')\right] \int (\Delta_1 \widetilde{W}_{N,t}) \left(\dfrac{x+x'}{2},v\right)e^{i\,v\cdot\frac{(x-x')}{\e}}dv; \displaybreak[0]\\
\widetilde{B}_4 &(x;x')\\
 = N &\left[U_t(x) - U_t(x') - \nabla U_t\left(\dfrac{x+x'}{2}\right)\cdot(x-x')\right] \int \widetilde{W}_{N,t}\left(\dfrac{x+x'}{2},v\right) v^2 e^{i\,v\cdot \frac{(x-x')}{\e}}dv; \displaybreak[0]\\
\widetilde{B}_5 &(x;x')\\
 = &- \dfrac{N\e^2}{2} \left[\nabla U_t(x) - \dfrac{1}{2} \nabla^2 U_t\left(\dfrac{x+x'}{2}\right) (x-x') - \nabla U_t\left(\dfrac{x+x'}{2}\right)\right] \int (\nabla_1 \widetilde{W}_{N,t}) \left(\dfrac{x+x'}{2},v\right) e^{i\,v\cdot \frac{(x-x')}{\e}}dv; \displaybreak[0]\\
\widetilde{B}_6 &(x;x')\\
 = &- N\e \left[\nabla U_t(x) - \dfrac{1}{2} \nabla^2 U_t\left(\dfrac{x+x'}{2}\right)(x-x') - \nabla U_t\left(\dfrac{x+x'}{2}\right)\right] \int \widetilde{W}_{N,t}\left(\dfrac{x+x'}{2},v\right) v e^{i\,v\cdot \frac{(x-x')}{\e}}dv; \displaybreak[0]\\
\widetilde{B}_7 &(x;x')\\
 = &- N\e \left[U_t(x) - U_t(x') - \nabla U_t\left(\dfrac{x+x'}{2}\right)\cdot(x-x') \right] \int (v\cdot \nabla_1 \widetilde{W}_{N,t}) \left(\dfrac{x+x'}{2},v\right) e^{i\,v\cdot \frac{(x-x')}{\e}}dv;
\end{align*}}
where we used the notation $\nabla_1$ and $\Delta_1$ to indicate derivatives with respect to the first variable.

In order to gain extra powers of $\e$, we write  
\begin{equation*}
\begin{split} 
&U_t (x) - U_t (x') - \nabla U_t \left( \frac{x+x'}{2} \right) \cdot (x-x') \\ 
&= \int_0^1 d\lambda \left[ \nabla U_t \left(\lambda x + (1-\lambda) x'\right) - \nabla U_t \left(\dfrac{(x+x')}{2}\right) \right] \cdot (x-x')  \\ 
&= \sum_{i,j=1}^3\int_0^1 d\lambda\left(\lambda-\frac{1}{2}\right) \int_0^1 d\mu\,\partial_i \partial_j U_t\left(\mu (\lambda x + (1-\lambda) x') + (1-\mu) \frac{(x+x')}{2}\right) (x-x')_i (x-x')_j.
\end{split} 
\end{equation*}
We notice that $U_t$ defined in \eqref{eq:convolution} has a convolution structure. Therefore   derivatives of $U_t$ are equivalent to derivatives of the spatial density $\widetilde{\varrho}_t$. Hence, when integrating out the $z$ variable in the Fefferman - de la Llave representation formula \eqref{eq:FDL-smooth}, we are left with
\be\label{eq:U-taylor}
\begin{split}
U_t &(x) - U_t (x') - \nabla U_t \left( \dfrac{x+x'}{2} \right) \cdot (x-x') \\
&= \sum_{i,j=1}^3\int_0^1 d\lambda\,\left(\lambda-\dfrac{1}{2}\right) \int_0^1 d\mu\,\int_0^\infty \dfrac{dr}{r^{2}}\\
&\ \times\int dy\,\chi_{(r,y)}\left(\mu (\lambda x + (1-\lambda) x') + (1-\mu)\dfrac{(x+x')}{2}\right)\partial_i\partial_j\widetilde{\varrho_t}(y)\, (x-x')_i (x-x')_j.
\end{split}
\ee
Inserting \eqref{eq:U-taylor} into the definition of $\widetilde{B}_1$, using twice the identity
\be\label{eq:v-der}
(x-x')\int\widetilde{W}_{N,t}\left(\frac{x+x'}{2},v\right)\,e^{iv\cdot\frac{x-x'}{\e}}\,dv=-i\e\int \nabla_v\widetilde{W}_{N,t}\left(\frac{x+x'}{2},v\right)\,e^{-iv\cdot\frac{x-x'}{\e}}dv 
\ee
and Young's inequality, we get
{\small
\begin{equation*}
\begin{split}
|\widetilde{B}_1&(x;x')|\\
\leq & C\,N\,\e^2\sum_{i,j=1}^3\int_0^1d\lambda\,\left|\lambda-\frac{1}{2}\right|\int_0^1 d\mu\left| \int_0^\infty\frac{dr}{r^{2}}\right.\\
\quad&\left.\int dy\,\chi_{(r,y)}(\mu(\lambda x+(1-\lambda)x')+(1-\mu)(x+x')/2)\partial^2_{i,j}\widetilde{\varrho}_t(y)\int dv\,\partial^2_{v_i,v_j}\widetilde{W}_{N,t}\left(\frac{x+x'}{2},v\right)e^{iv\cdot\frac{(x-x')}{\e}} \right|
\end{split}
\end{equation*}}

Therefore, the Hilbert-Schmidt norm of the operator $(1+x^2)\widetilde{B}_1$, where $(1+x^2)$ is the multiplication operator, can be estimated as follows:
\begin{equation*}
\begin{split}
\|(1+x^2)&\widetilde{B}_1\|_{\rm HS}^2\\
\leq& CN\e^4\int dq\int dp'\left[1+q^2+\e^2 p^2\right]^2\left|\sum_{i,j=1}^3\int_0^1d\lambda\,\left(\lambda-\frac{1}{2}\right)\int_0^1 d\mu \int_0^\infty\frac{dr}{r^{2}}\right.\\
&\quad\quad\quad\quad\left.\int dy\,\chi_{(r,y)}(q+\e\mu(\lambda-1/2)p)\partial^2_{v_i,v_j}\widetilde{\varrho}_t(y)\int dv\,\partial^2_{v_i,v_j}\widetilde{W}_{N,t}\left(q,v\right)e^{iv\cdot p} \right|^2
\end{split}
\end{equation*}
where we performed the change of variables 
\be\label{eq:change-var}
q=\dfrac{x+x'}{2}\,,\quad\quad\quad p=\dfrac{x-x'}{\e}
\ee
with Jacobian $J=8\,\e^3=8\,N$.\\
%We introduce the shorthand notation
%%
%\be\label{eq:braket-not}
%\langle X \rangle_\e:=1+X^2+\e^2 (X')^2
%\ee
%
%\be
%
%\ee
%
We fix $k>0$ and divide the integral into the two sets  $$A_{<}:=\{ r\in\R_+\ |\ r\leq k\} \quad\mbox{and}\quad A_>:=\{r\in\R_+\ |\ r>k\},$$ so that
\begin{equation}\label{eq:r-split}
\begin{split}
\|(1+&x^2)\widetilde{B}_1\|_{\rm HS}^2\\
\leq& CN\e^4\int dq\int dp[1+q^2+\e^2 p^2]^2\sum_{i,j=1}^3\int_0^1d\lambda\,\left|\lambda-\frac{1}{2}\right|\\
&\quad\int_0^1 d\mu \left|\int_{A_<}\frac{dr}{r^{2}}\int dy\,\chi_{(r,y)}(q+\e\mu(\lambda-1/2)p)\partial^2_{v_i,v_j}\widetilde{\varrho}_t(y)\int dv\,\partial^2_{v_i,v_j}\widetilde{W}_{N,t}\left(q,v\right)e^{iv\cdot p} \right|^2\\
+& CN\e^4\int dq\int dp[1+q^2+\e^2 p^2]^2\sum_{i,j=1}^3\int_0^1d\lambda\,\left|\lambda-\frac{1}{2}\right|\\
&\quad\int_0^1 d\mu \left|\int_{A_>}\frac{dr}{r^{2}}\int dy\,\chi_{(r,y)}(q+\e\mu(\lambda-1/2)p)\partial^2_{v_i,v_j}\widetilde{\varrho}_t(y)\int dv\,\partial^2_{v_i,v_j}\widetilde{W}_{N,t}\left(q,v\right)e^{iv\cdot p} \right|^2
\end{split}
\end{equation}
Denote by $\mathfrak{A}_<$ and $\mathfrak{A}_>$ the first and the second term of the sum on the r.h.s. of \eqref{eq:r-split} respectively. 
 For $\mathfrak{A}_<$ we use Young's inequality and H\"{o}lder's inequality with conjugated exponents $\theta=1$ and $\theta'=\infty$, then we perform the integral in the $y$ variable to extract $r^3$ which cancels the singularity, thus leading to the bound 
\be\label{eq:HS-B1-r0}
\mathfrak{A}_<\leq CN\e^4\int dq\int dv(1+q^2)^2|\nabla^2_v\widetilde{W}_{N,t}(q,v)|^2+CN\e^8\int dq\int dv\,|\nabla^4_v\widetilde{W}_{N,t}(q,v)|^2
\ee
where $C$ depends on $\|\nabla^2\widetilde{\varrho}_t\|_{L^\infty}$.\\
For $\mathfrak{A}_>$, we integrate by parts twice in the $y$ variable and recall that  $e^{-|z-y|^2/r^2}(1+|z-y|^2/r^2)$ is bounded uniformly in $z\in\R^3$. Since $\widetilde{\varrho}_t\in L^1(\R^3)$ we get the bound
\be\label{eq:HS-B1-rinf}
\mathfrak{A}_>\leq CN\e^4\int dq\int dv(1+q^2)^2|\nabla^2_v\widetilde{W}_{N,t}(q,v)|^2+CN\e^8\int dq\int dv\,|\nabla^4_v\widetilde{W}_{N,t}(q,v)|^2
\ee
where $C$ depends on $\|\widetilde{\varrho}_t\|_{L^1}$.\\
Whence, considering the two estimates \eqref{eq:HS-B1-r0}, \eqref{eq:HS-B1-rinf} together, we get
\be\label{eq:bound-HS-tildeB1}
\|(1+x^2)\widetilde{B}_1\|_{\rm HS}
\leq C\sqrt{N}\e^2\|\widetilde{W}_{N,t}\|_{H_2^2}
+C\sqrt{N}\e^4\|\widetilde{W}_{N,t}\|_{H^4}
\ee
where $C=C(\|\widetilde{\varrho}_t\|_{L^1},\|\nabla^2\widetilde{\varrho}_t\|_{L^\infty})$.

The Hilbert-Schmidt norms $\|(1+x^2)\widetilde{B}_3\|_{\rm HS}$, $\|(1+x^2)\widetilde{B}_4\|_{\rm HS}$ and $\|(1+x^2)\widetilde{B}_7\|_{\rm HS}$ can be handled analogously, thus obtaining 
\be
\|(1+x^2)\widetilde{B}_3\|_{\rm HS}\leq C\sqrt{N}\e^4\|\widetilde{W}_{N,t}\|_{H_4^4}
+C\sqrt{N}\e^6\|\widetilde{W}_{N,t}\|_{H_4^6}
\ee
\be
\|(1+x^2)\widetilde{B}_4\|_{\rm HS}\leq C\sqrt{N}\e^2\|\widetilde{W}_{N,t}\|_{H_4^2}
+C\sqrt{N}\e^4\|\widetilde{W}_{N,t}\|_{H_4^4}
\ee
\be
\|(1+x^2)\widetilde{B}_7\|_{\rm HS}\leq C\sqrt{N}\e^3\|\widetilde{W}_{N,t}\|_{H_3^2}
+C\sqrt{N}\e^5\|\widetilde{W}_{N,t}\|_{H_2^5}
\ee
To bound the  $\widetilde{B}_6$ term in which a higher order derivative of $U_t$ appears, we proceed as for $\widetilde{B}_1$: we first use \eqref{eq:U-taylor} and then divide the integral in the $r$ variables into two parts, according to the definition of the sets $A_<$ and $A_>$:
\be\label{eq:tildeB6}
\begin{split}
|\widetilde{B}_6&(x;x')|\\
\leq & C\,N\,\e^3\int_0^1d\lambda\,\left|\lambda-\dfrac{1}{2}\right|\int_0^1 d\mu\left| \int_{A_<}\dfrac{dr}{r^{2}}\int dy\,\nabla\chi_{(r,y)}\left((2\lambda-1)\dfrac{\mu}{2}(x-x')+\dfrac{(x+x')}{2}\right)\nabla^2\widetilde{\varrho}_t(y)\right.\\
&\quad\quad\quad\quad\quad\quad\quad\quad\quad\quad\quad\quad\quad\quad\quad\quad\quad\quad\quad\quad\quad\quad\quad\quad\quad\left.\int dv\,\nabla^2_{v}\widetilde{W}_{N,t}\left(\frac{x+x'}{2},v\right)e^{iv\cdot\frac{(x-x')}{\e}} \right|\\
+&C\,N\,\e^3\int_0^1d\lambda\,\left|\lambda-\dfrac{1}{2}\right|\int_0^1 d\mu\left| \int_{A_>}\dfrac{dr}{r^{2}}\int dy\,\nabla^3\chi_{(r,y)}\left((2\lambda-1)\dfrac{\mu}{2}(x-x')+\dfrac{(x+x')}{2}\right)\widetilde{\varrho}_t(y)\right.\\
&\quad\quad\quad\quad\quad\quad\quad\quad\quad\quad\quad\quad\quad\quad\quad\quad\quad\quad\quad\quad\quad\quad\quad\quad\quad\left.\int dv\,\nabla^2_{v}\widetilde{W}_{N,t}\left(\frac{x+x'}{2},v\right)e^{iv\cdot\frac{(x-x')}{\e}} \right|\\
\end{split}
\ee
where in the second term we have integrated by parts twice in the $y$ variable. 

We denote by $\widetilde{B}_6^<$ and $\widetilde{B}_6^>$ the operators with kernels defined respectively by the first and second term in the r.h.s. of \eqref{eq:tildeB6}. We consider $\|(1+x^2)\widetilde{B}_6^<\|_{\rm HS}$, perform the change of variables \eqref{eq:change-var} and choose $k$ such that $\int_0^k r^{1-\alpha}dr=1$, for some $\alpha\in(0,1)$. Then we can apply Young's inequality with measure  $r^{\alpha-1}dr$ and we get the bound
\be\label{eq:HS-B6-r0}
\begin{split}
\|(1+x^2)\widetilde{B}_6^<\|^2_{\rm HS}&\leq C\,N\,\e^6\int dq\int dp[1+q^2+\e^2p^2]^2\int_0^1d\lambda\,\left|\lambda-\dfrac{1}{2}\right|^2\int_0^1 d\mu \int_0^{k}\dfrac{dr}{r^{1-\alpha}}\frac{1}{r^{4+2\alpha}}\\
&\quad\quad\quad\quad\int dy\,\chi_{(r,y)}(q+\e\mu(\lambda-1/2)p)\frac{|q+\e\mu(\lambda-1/2)p-y|}{r}\\
&\quad\quad\quad\quad\int dy'\chi_{(r,y')}(q+\e\mu(\lambda-1/2)p)\frac{|q+\e\mu(\lambda-1/2)p-y'|}{r}\\
&\quad\quad\quad\quad\iint dv\,dv'\,\nabla^2_{v}\widetilde{W}_{N,t}(q,v)\,\nabla^2_{v'}\widetilde{W}_{N,t}(q,v')\,e^{i(v-v')\cdot p} \\
&\leq C\,N\,\e^6\|\widetilde{W}_{N,t}\|_{H^2_2}^2+C\,N\,\e^{10}\|\widetilde{W}_{N,t}\|_{H^4}^2
\end{split}
\ee
where $C$ depends on $\|\nabla^2\widetilde{\varrho}_t\|_{L^\infty}$.\\
For $r\in A_>$, we consider $\|(1+x^2)\widetilde{B}_6^>\|_{\rm HS}$. We perform the change of variables \eqref{eq:change-var} and recall that  $e^{-|z-y|^2/r^2}(|z-y|^k/r^k)\leq C$ for every $z\in\R^3$ and $k\in\N$. Since $\widetilde{\varrho}_s\in L^1(\R^3)$ we get the bound
\be\label{eq:HS-B6-rinf}
\|(1+x^2)\widetilde{B}_6^>\|_{\rm HS}^2\leq C\,N\,\e^6\|\widetilde{W}_{N,t}\|_{H^2_2}^2+C\,N\,\e^{10}\|\widetilde{W}_{N,t}\|_{H^4}^2
\ee
where $C$ depends on $\|\widetilde{\varrho}_t\|_{L^1}$.\\

Thus considering the two terms together we get the desired bound 
\be\label{eq:bound-HS-tildeB6}
\|(1+x^2)\widetilde{B}_6\|_{\rm HS}
\leq C\sqrt{N}\e^3\|\widetilde{W}_{N,t}\|_{H_2^2}
+C\sqrt{N}\e^5\|\widetilde{W}_{N,t}\|_{H^4}
\ee
where $C=C(\|\widetilde{\varrho}_t\|_{L^1},\|\nabla^2\widetilde{\varrho}_t\|_{L^\infty})$.

The norms $\|(1+x^2)\widetilde{B}_j\|_{\rm HS}$, $j=2,5$, can be dealt analogously, thus obtaining
\be\label{eq:bound-HS-tildeB6}
\|(1+x^2)\widetilde{B}_2\|_{\rm HS}
\leq C\sqrt{N}\e^4\|\widetilde{W}_{N,t}\|_{H_2^2}
+C\sqrt{N}\e^6\|\widetilde{W}_{N,t}\|_{H^4}
\ee
and 
\be\label{eq:bound-HS-tildeB6}
\|(1+x^2)\widetilde{B}_5\|_{\rm HS}
\leq C\sqrt{N}\e^4\|\widetilde{W}_{N,t}\|_{H_2^4}
+C\sqrt{N}\e^6\|\widetilde{W}_{N,t}\|_{H^6}
\ee
where $C=C(\|\widetilde{\varrho}_t\|_{L^1},\|\nabla^2\widetilde{\varrho}_t\|_{L^\infty})$.

Gathering together all the terms, we get
\be
\begin{split} 
\| (1 + x^{2}) &\widetilde{B} \|_\text{HS}\\
& \leq C \sqrt{N} \left[ \eps^2 \| \widetilde{W}_{N,t} \|_{H^2_{4}} + \eps^3  \| \widetilde{W}_{N,t} \|_{H^3_{4}} + \eps^4 \| \widetilde{W}_{N,t} \|_{H^4_{4}} + \eps^5 \| \widetilde{W}_{N,t} \|_{H^5_4}+\eps^6 \| \widetilde{W}_{N,t} \|_{H^6_4} \right] 
\end{split}
\ee

\end{proof}

\section{Proof of Theorem \ref{thm:tr-HS}}\label{sect:proof}

%\noindent {\bf Proof of the trace norm convergence: Eq. \eqref{eq:tr-norm}.}
%\smallskip

\noindent From Lemma \ref{lemma:traceH-V} Eq. \eqref{eq:traceH-V} and Proposition \ref{prop:error-term}, we know that 
\be\label{eq:trace-dom+err}
\tr\ |\omega_{N,t}-\widetilde{\omega}_{N,t}|\leq \frac{1}{\e}\int_0^t\tr\left|\left[\frac{1}{|\cdot|}*(\varrho_s-\widetilde{\varrho}_s)\,,\,\widetilde{\omega}_{N,s}\right]\right|\,ds
+ C\,N\,\e\,\int_0^t \left(1+\sum_{k=1}^4 \e^k\|\widetilde{W}_{N,s}\|_{H_4^{k+2}}\right)\,ds.
\ee
We focus on the first term on the r.h.s. of \eqref{eq:trace-dom+err}. Recalling Lemma \ref{lemma:dominant} Eq. \eqref{eq:dom-term}, we fix a positive real number $k$ and we write
\be\label{eq:trace-est}
\begin{split}
\tr\left|\left[\frac{1}{|\cdot|}*(\varrho_s-\widetilde{\varrho}_s)\,,\,\widetilde{\omega}_{N,s}\right]\right|&\leq C\int_0^\infty \frac{1}{r^{5}}\iint |\varrho_s(y)-\widetilde{\varrho}_s(y)|\,\chi_{(r,z)}(y)\,\tr\ |[\chi_{(r,z)}\,,\,\widetilde{\omega}_{N,s}]|\,dz\,dy\,dr\\
&\leq C\int_0^k \dfrac{1}{r^{\frac{7}{2}+3\delta}}\int |\varrho_s(y)-\widetilde{\varrho}_s(y)|\,g_r(y)\,\sum_{i=1}^3\|\varrho_{|[x_i,\widetilde{\omega}_{N,s}]|}\|_{L^1}^{\frac{1}{6}+\delta}\,dy\,dr\\
&+C\sum_{i=1}^3\|\varrho_{|[x_i,\widetilde{\omega}_{N,s}]|}\|_{L^1}\int_k^\infty\frac{1}{r^6}\int |\varrho_s(y)-\widetilde{\varrho}_s(y)|\,dy\,dr
\end{split}
\ee
where 
\begin{equation*}
g_r(y)=\int \chi_{(r,z)}(y)\,\left(\varrho_{|[x,\widetilde{\omega}_{N,s}]|}^*(z)\right)^{\frac{5}{6}-\delta}\,dz\,,
\end{equation*}
and we used Lemma \ref{lemma:tr1} in the second line of equation \eqref{eq:trace-est} and the bound \eqref{eq:tr1-L1} in the last line of equation \eqref{eq:trace-est}.

We now compute the $L^\infty$ norm of $g_r(y)$:
\begin{equation*}
\|g_r\|_{L^\infty}\leq Cr^{\frac{3}{q}}\|\varrho^*_{|[x,\widetilde{\omega}_{N,s}]|}\|_{L^{(\frac{5}{6}-\delta)q'}}^{\frac{5}{6}-\delta}\leq Cr^{\frac{3}{q}}\|\varrho_{|[x,\widetilde{\omega}_{N,s}]|}\|_{L^{(\frac{5}{6}-\delta)q'}}^{\frac{5}{6}-\delta}
\end{equation*}
where $q$ and $q'$ are conjugated H\"{o}lder exponents and we have used the $L^s$ boundedness of the Hardy-Littlewood maximal operator in the last inequality, for $s=(\frac{5}{6}-\delta)q'>1$. 

To deal with the singularity at zero in the $r$ variable in \eqref{eq:trace-est}, we choose $q'>6$ and $q<6/5$.
%\begin{equation*}
%\begin{split}
%\int_0^\infty &{r^{-\frac{7}{2}-3\delta+\frac{3}{p}}}\|\varrho_s-\widetilde{\varrho}_s\|_{L^1}\,\|\varrho_{|[x,\widetilde{\omega}_{N,t}]|}\|_{L^1}^{\frac{1}{6}+\delta}\,\|\varrho_{|[x,\widetilde{\omega}_{N,t}]|}\|_{L^{(\frac{5}{6}-\delta)/q}}^{\frac{5}{6}-\delta}\,dy\,dr\\
%&\quad = \int_0^k{r^{-\frac{7}{2}-3\delta+\frac{3}{p_1}}} \|\varrho_s-\widetilde{\varrho}_s\|_{L^1}\,\|\varrho_{|[x,\widetilde{\omega}_{N,t}]|}\|_{L^1}^{\frac{1}{6}+\delta}\,\|\varrho_{|[x,\widetilde{\omega}_{N,t}]|}\|_{L^{(\frac{5}{6}-\delta)/q_1}}^{\frac{5}{6}-\delta}\,dr\\
%&\quad + \int_k^\infty {r^{-\frac{7}{2}-3\delta+\frac{3}{p_2}}} \|\varrho_s-\widetilde{\varrho}_s\|_{L^1}\,\|\varrho_{|[x,\widetilde{\omega}_{N,t}]|}\|_{L^1}^{\frac{1}{6}+\delta}\,\|\varrho_{|[x,\widetilde{\omega}_{N,t}]|}\|_{L^{(\frac{5}{6}-\delta)/q_2}}^{\frac{5}{6}-\delta}\,dr\,,
%\end{split}
%\end{equation*}
%
%where $(q_1,p_1)$ and $(q_2,p_2)$ are conjugated H\"{o}lder exponents with $q_1>6$ and $p_1<6/5$, so that the first integral is finite close to $r=0$, and  with $q_2<6$ and $p_2>6/5$, that makes the second integral finite at infinity. \textcolor{red}{Add details!} 
Hence, there exist a constant $C_{t,1}$, depending on time but independent on $N$, such that
\be\label{eq:dominated}
\begin{split}
\tr\ \left|\left[\frac{1}{|\cdot|}*(\varrho_s-\widetilde{\varrho}_s)\,,\,\widetilde{\omega}_{N,s}\right]\right|&\leq C\|\varrho_s-\widetilde{\varrho}_s\|_{L^1}\,\sum_{i=1}^3\left(\|\varrho_{|[x_i,\widetilde{\omega}_{N,s}]|}\|_{L^1}^{\frac{1}{6}+\delta}\|\varrho_{|[x_i,\widetilde{\omega}_{N,s}]|}\|_{L^p}^{\frac{5}{6}-\delta}+\|\varrho_{|[x_i,\widetilde{\omega}_{N,s}]|}\|_{L^1}\right)\\
&\leq C_{t,1}\,\e\,\tr\ |\omega_{N,s}-\widetilde{\omega}_{N,s}|\,,
\end{split}
\ee
where in the last inequality we used assumption \eqref{eq:sc-assumptions} with $p>5$.

We now analyse the second term on the r.h.s. of \eqref{eq:trace-dom+err}. Using assumptions $i),\,ii)$ and $iii)$ in Theorem \ref{thm:tr-HS} and a trivial adaptation of Appendix A in \cite{S19}, we can bound the weighted Sobolev norms $\|\widetilde{W}_{N,t}\|_{H^{k+2}_4}$, for $k=1,\dots,4$, in terms of the initial data $W_N$:
\be\label{eq:errors-0}
\|\widetilde{W}_{N,t}\|_{H_4^{k+2}}\leq C_t\|{W}_N\|_{H_4^{k+2}}\,,
\ee
where $C_t$ is a time dependent constant, for $t\in[0,T]$.  

Therefore, Eq. \eqref{eq:dominated} and Eq. \eqref{eq:errors-0} leads to the Gr\"{o}nwall type estimate
\begin{equation*}
\tr\ |\omega_{N,t}-\widetilde{\omega}_{N,t}|\leq {C}_{t,1}\,\int_0^t \tr\ |\omega_{N,s}-\tilde{\omega}_{N,s}|\,ds
+ C_{t,2}\,N\,\e\,\int_0^t \left(1+\sum_{k=1}^4 \e^k\|{W}_{N}\|_{H_4^{k+2}}\right)\,ds\,,
\end{equation*} 
where, for every fixed $T>0$ and for all $t\in[0,T]$, ${C}_{t,1}$ is proportional to the constant appearing in assumption \eqref{eq:sc-assumptions} and $C_{t,2}$ depends on $t\in[0,T]$ and on $\|W_N\|_{H_4^2}$. Both $C_{t,1}$ and $C_{t,2}$ are independent of $N$. Hence, by Gr\"{o}nwall Lemma, we conclude the proof of \eqref{eq:tr-norm}.
\medskip

\section{Translation invariant and steady states of the Vlasov-Poisson system}\label{sect:steady-states}

In general, for $T>0$ fixed, we do not know which hypotheses $\omega_N$ should satisfy at time $t=0$ in order for the bounds \eqref{eq:sc-assumptions} to hold for all $t\in[0,T]$.  In this section, we are interested in identifying a special class of states which satisfy the assumptions of Theorem \ref{thm:tr-HS}. 

We start by considering a sequence of fermionic reduced densities $\omega_N$ which are superpositions of coherent states 
\begin{equation*}
f_{q,p}(x)=\e^{-\frac{3}{2}}e^{-ip\cdot x/\e}G(x-q)
\end{equation*}
where, for every $\delta>0$, $G$ is the Gaussian defined as follows
\begin{equation*}
G(x-q)=\dfrac{e^{-{|x-q|^2}/{2\delta^2}}}{(2\pi\delta^2)^{\frac{3}{4}}}\,.
\end{equation*}
Namely, for $M:\R^3\times\R^3\to\R_+$ probability density such that $0\leq M(q,p)\leq 1$ for all $(q,p)\in\R^3\times\R^3$ and $\iint M(q,p)\,dq\,dp=1$, we define the sequence of fermionic operators
\be\label{eq:super-coherent}
\omega_N=\iint M(q,p)\,|f_{(q,p)}\rangle\langle f_{(q,p)}|\,dq\,dp\,,
\ee
with kernel
\be\label{eq:kernel-super-coherent}
\omega_{N}(x;y)=\iint_{\R^3\times\R^3} M(q,p)\,f_{(q,p)}(x)\,\overline{f_{(q,p)}(y)}\,dq\,dp\,, 
\ee
where  in formula \eqref{eq:kernel-super-coherent} we used the bra-ket notation.

We notice that, if $M\in\mathcal{W}^{1,1}(\R^3\times\R^3)$, the Sobolev space of functions such that the norm $\|\nabla M\|_{L^1}$ is finite, then the sequence $\omega_N$ defined as in \eqref{eq:super-coherent} satisfies the bound
\begin{equation*}
\|\varrho_{|[x,\omega_N]|}\|_{L^1(\R^3)}=\tr\ |[x,\omega_N]|\leq CN\e\,,
\end{equation*}
where $C=\|\nabla_v M\|_{L^1(\R^3_x\times\R^3_v)}$. Indeed, consider the kernel of the commutator $[x,\omega_N]$:
\begin{equation*}
\begin{split}
[x,\omega_N](x;y)&=\iint (x-y)\,M(q,p)\,f_{(q,p)}(x)\,\overline{f_{(q,p)}(y)}\,dq\,dp\\
&=\dfrac{N\e}{i}\iint \nabla_p M(q,p)\,f_{(q,p)}(x)\,\overline{f_{(q,p)}(y)}\,dq\,dp
\end{split}
\end{equation*}
thus
\begin{equation*}
[x,\omega_N]=\dfrac{N\e}{i}\iint \nabla_p M(q,p)\,|f_{(q,p)}\rangle\langle f_{(q,p)}|\,dq\,dp
\end{equation*}
hence the trace norm is easily bounded as follows
\begin{equation*}
\tr\ |[x,\omega_N]|\leq N\e\iint |\nabla_p M(q,p)|\,\|f_{(q,p)}\|_{L^2(\R^3)}^2\,dq\,dp=N\e\|\nabla_v M\|_{L^1(\R^3\times\R^3)}\,.
\end{equation*}
We do expect that, under suitable regularity and integrability assumptions on $M$, there exists a finite positive constant $C$ such that $\|\varrho_{|[x,\omega_N]|}\|_{L^p(\R^3)}\leq CN\e$ for some $p>5$. 

Now consider $\omega_N$ defined in \eqref{eq:super-coherent} and satisfying the bound
\begin{equation}\label{eq:bounds-0}
\|\varrho_{|[x,\omega_N]|}\|_{L^1}+\|\varrho_{|[x,\omega_N]|}\|_{L^p} \leq C\,N\e\,,\quad\quad p>5
\end{equation}
to be the initial datum of the Cauchy problem associated with Eq. \eqref{eq:VP-weyl}. Its evolution is denoted by $\widetilde{\omega}_{N,t}$. 

In particular, if we choose $\tilde\omega_{N,t}$ to be translation invariant, no matter whether the interaction is attractive or repulsive, $\|\varrho_{|[x,\widetilde{\omega}_{N,t}]|}\|_{L^p}=\|\varrho_{|[x,\widetilde{\omega}_{N,0}]|}\|_{L^p}=\|\varrho_{|[x,{\omega}_{N}]|}\|_{L^p}$.

If we restrict to attractive interactions, then the existence of steady states for the Vlasov-Poisson system (cf. \cite{Rein}) allows to enlarge the class of initial data. Indeed, if $\widetilde{\omega}_{N,t}$ is a steady state for the Weyl transformed Vlasov-Poisson system then, for every fixed $T>0$, $\widetilde{\omega}_{N,t}$ automatically satisfies the bound
\begin{equation}\label{eq:bounds-t}
\|\varrho_{|[x,\widetilde{\omega}_{N,t}]|}\|_{L^1}+\|\varrho_{|[x,\widetilde{\omega}_{N,t}]|}\|_{L^p} \leq C\, N\e\,,\quad\quad p>5
\end{equation}
for all $t\in (0,T]$, if it does at time $t=0$. 

Moreover, if $\widetilde{\omega}_{N,t}$ is a steady state for Eq. \eqref{eq:VP-weyl} with gravitational interaction, then its Wigner transform $\widetilde{W}_{N,t}$ solves the equation
\be\label{eq:VP-steady}
\left\{\begin{array}{l}
v\cdot\nabla_x\widetilde{W}_{N,t}- \nabla_x U\cdot\nabla_v\widetilde{W}_{N,t}=0\,,\\\\
-\Delta_xU(t,x)=\widetilde{\varrho}_t(x)\,,\\\\
\widetilde{\varrho}_t(x)=\int \widetilde{W}_{N,t}(x,v)\,dv\,.
\end{array}
\right.
\ee

One example of states which satisfy \eqref{eq:VP-steady} are functions of the form
\begin{equation*}
\widetilde{W}_{N,t}(x,v)=\Phi\circ H(t,x,v)
\end{equation*}
where $\Phi$ is a smooth function of the local energy
\be\label{eq:energy-H}
H(t,x,v)=\dfrac{|v|^2}{2}- U(t,x)\,.
\ee
In particular, choose $\omega_N$ as in definition \eqref{eq:super-coherent} and assume $M$ to be such that $M*G$, where $G$ is a normalized Gaussian. Then we can express it as the composition of a smooth compactly supported function $\Phi$ with the regularized local energy 
\be\label{eq:energy-H-smooth}
\widetilde{H}(t,x,v)=\dfrac{|v|^2}{2}-(U*G)(t,x)\,.
\ee
With this choice, $\omega_N$ is a fermionic operator and $M*G=\Phi\circ \widetilde{H}$ satisfies Eq. \eqref{eq:VP-steady}, that is $\widetilde{\omega}_{N,t}$ is a steady state for \eqref{eq:VP-weyl}. Indeed, let us denote $\widetilde{M}_t(q,p)=M_t(\cdot,p)*G(q)$ and consider
{\small
\begin{equation*}
v\cdot\nabla_x\widetilde{W}_{N,t}(x,v)=\dfrac{1}{(2\pi)^3}\iiint M_t(q,p)e^{-i(p+v)\cdot y}\dfrac{2}{\delta^2}v\cdot(x-q)\,G(x-q+\e y/2)\,G(x-q-\e y/2)\,dq\,dp\,dy
\end{equation*}}
and 
{\small
\begin{equation*}
\begin{split}
&\nabla_x U(t,x)\cdot\nabla_v\widetilde{W}_{N,t}(x,v)\\
&=-\dfrac{1}{(2\pi)^3}\iiint\nabla_p M_t(q,p)\cdot\left[\nabla\dfrac{1}{|\cdot|}*\left(\varrho_{M_t}*G^2\right)(x)\right]G\left(x-q+\e \dfrac{y}{2}\right)G\left(x-q-\e \dfrac{y}{2}\right)e^{-(v+p)\cdot y}\,dq\,dp\,dy\,
\end{split}
\end{equation*}}
where $\varrho_{M_t}(x)=\int M_t(x,v)\,dv$. \\ We insert the two above expressions into \eqref{eq:VP-steady} and we obtain
{\small
\begin{equation*}
\begin{split}
&v\cdot\nabla_x\widetilde{W}_{N,t}- \nabla_x U\cdot\nabla_v\widetilde{W}_{N,t}\\
&=\frac{1}{(2\pi)^3}\iint \left[ -v\cdot\nabla_x\widetilde{M}_t(x,p)-\nabla_x\left(|\cdot|^{-1}*(\varrho_{M_t}*G^2)\right)(x)\cdot\nabla_p\widetilde{M}_t(x,p)\right]G\left(\frac{\e y}{4}\right)e^{-i(p+v)\cdot y}\,dy\,dp\,.
\end{split}
\end{equation*}}
The choice $\widetilde{M}_t=\Phi\circ \widetilde{H}$ guarantees the r.h.s. of the above expression to be equal to zero. Therefore $\widetilde{\omega}_{N,t}$ is a steady state of the Weyl transformed Vlasov-Poisson system \eqref{eq:VP-weyl}. Moreover, if we assume \eqref{eq:bounds-0}, the bounds \eqref{eq:bounds-t} are satisfied.

\medskip

{\bf Acknowledgement.} The support of the Swiss National Science Foundation through the Ambizione grant ${\rm PZ00P2\_161287 / 1}$ and the Eccellenza fellowship {\rm PCEFP2\_181153} is gratefully acknowledged.

%%%%%%%%%%%%%%%%%%%%%%%%%%%%%%%%%%%%%%%%%%%%%
%%%%%%%%%%%%%%%%%%%%%%%%%%%%%%%%%%%%%%%%%%%%%

\adresse


\begin{thebibliography}{10}

\bibitem{AKN1} L. Amour, M. Khodja and J. Nourrigat.
The semiclassical limit of the time dependent Hartree-Fock equation: the Weyl symbol of the solution.
\emph{Anal. PDE}, {\bf 6} (2013), no. 7, pp. 1649--1674. 

\bibitem{AKN2}  L. Amour, M. Khodja and J. Nourrigat.
The classical limit of the Heisenberg and time dependent Hartree-Fock equations: the Wick symbol of the solution. {\it Math. Res. Lett.}, {\bf 20} (2013), no. 1, pp. 119--139. 

\bibitem{APPP} A. Athanassoulis, T. Paul, F. Pezzotti and M. Pulvirenti. Strong Semiclassical Approximation of Wigner Functions for the Hartree Dynamics. \emph{Rend. Lincei Mat. Appl.}, \textbf{22}  (2011), pp. 525--552.

\bibitem{BD} C. Bardos and P. Degond, {Global existence for the Vlasov--Poisson equation in 3 space variables with small initial data}, \emph{Ann. Inst. H. Poincar\'e Anal. Non Lin\'eaire}, \textbf{2} (1985), pp. 101--118.

\bibitem{BBPPT} V. Bach, S. Breteaux, S. Petrat, P. Pickl, T. Tzaneteas. Kinetic energy estimates for the accuracy of the time-dependent Hartree--Fock approximation with Coulomb interaction. {\it J. Math. Pures Appl.}, {\bf 105} (2016), no. 1 , pp. 1--30.

\bibitem{BGGM1} C. Bardos, F. Golse, A.D. Gottlieb, N.J. Mauser. Mean field dynamics of fermions and the time-dependent Hartree--Fock equation. {\it
J. Math. Pures Appl.}, {\bf 82} (2003), no. 6, pp. 665--683.

\bibitem{BGGM2} C. Bardos, F. Golse, A.D. Gottlieb, N.J. Mauser. Accuracy of the time-dependent Hartree--Fock approximation for uncorrelated initial states. {\it
J. Stat. Phys.}, {\bf 115} (2004), pp. 1037--1055.

\bibitem{BDGM} C. Bardos, Ducomet, F. Golse, N. Mauser. The TDHF approximation for Hamiltonians with m-particle interaction potentials. {\it Commun. Math. Sci.}, {\bf 5}(2007), no. 1, pp. 1--9.
.

\bibitem{Rein} J. Batt, P. J. Morrison, G. Rein. Linear stability of stationary solutions of the Vlasov-Poisson system in three dimensions. {\it Arch. Ration. Mech. Anal.}, \textbf{130} (1995), no. 2, pp. 163--182.

\bibitem{BJPSS} N.~{Benedikter}, V.~Jaksic, M.~{Porta}, C. Saffirio and B.~{Schlein}. Mean-field Evolution of Fermionic Mixed States. \emph{Comm. Pure Appl. Math.}, \textbf{69} (2016), pp. 2250--2303. 


\bibitem{BPSS} N.~Benedikter, M.~Porta, C.~ Saffirio, B.~Schlein. From the Hartree-Fock dynamics to the
Vlasov equation. \emph{Arch. Ration. Mech. Anal.}, \textbf{221} (2016), no. 1, pp. 273--334.

\bibitem{BPS13} N.~{Benedikter}, M.~{Porta} and B.~{Schlein}. {Mean-field evolution of fermionic systems}. \emph{Comm. Math. Phys.}, {\bf 331} (2014), pp. 1087--1131.


\bibitem{BPS-springer} N. Benedikter, M. Porta, B. Schlein. { Effective evolution equations from quantum mechanics.} Springer Briefs in Mathematical Physics {\bf 7}, 2016.

\bibitem{BPS-rel} N. Benedikter, M. Porta, B. Schlein. { Mean-field dynamics of fermions with relativistic dispersion.} {\it J. Math. Phys.}, {\bf 55} (2014), no. 2.



%\bibitem{DMS} L. Desvillettes, E. Miot, C. Saffirio, \emph{Polynomial propagation of moments and global existence for a Vlasov--Poisson system with a point charge}, Ann. Inst. H. Poincar\'e (C) Anal. Non Lin\'eaire \textbf{32} (2015), no. 2, 373-400.

%\bibitem{DiLi} R. J. Di Perna, P.-L. Lions, \emph{Ordinary differential equations, transport equations and Sobolev spaces}, Inv. Math. \textbf{98} (1989), 511--547.

\bibitem{ESY}
A.~{Elgart}, L.~{Erd{\H{o}}s}, B.~{Schlein} and H.-T.~{Yau}. {Nonlinear {H}artree equation as the mean field limit of weakly coupled fermions}. \emph{J. Math. Pures Appl. (9)}, \textbf{83} (2004), no.~10,
 pp. 1241--1273.

\bibitem{FDLL} C.L.~Fefferman, R.~de la Llave. Relativistic stability of matter--I. \emph{Rev. Mat. Iberoam.}, \textbf{2} (1986),  no. 2, pp. 119--213. 

\bibitem{FLP} A.~Figalli, M.~Ligab\`o, T.~Paul. Semiclassical limit for mixed states with singular and rough potentials. \emph{Indiana Univ. Math. J.}, \textbf{61} (2012), no. 1, pp. 193--222.

\bibitem{FLS} S. Fournais, M. Lewin, J.P. Solovej. The semi-classical limit of large fermionic systems. {\it Calc. Var. Partial Differ. Equ.}, (2018), pp. 57--105.

\bibitem{FK} J. Fr\"ohlich, A. Knowles. A microscopic derivation of the time-dependent Hartree--Fock equation with Coulomb two-body interaction. {\it J. Stat. Phys.}, {\bf 145} (2011), no. 1, pp. 23--50.

\bibitem{GIMS} I. Gasser, R. Illner, P.A. Markowich and C. Schmeiser. Semiclassical, $t \to
          \infty$ asymptotics and dispersive effects for HF systems. \emph{Math. Modell. Numer. Anal.}, {\bf 32} (1998), pp. 699--713.
  
\bibitem{Golse} F. Golse. Mean field kinetic equations (2013). http://www.cmls.polytechnique.fr/perso/golse/\\M2/PolyKinetic.pdf

\bibitem{Golse, Mouhot, Paul} F. Golse, C. Mouhot, T. Paul. On the Mean Field and Classical Limits of Quantum Mechanics. {\it Comm. Math. Phys.}, {\bf 343}, pp. 165--205 (2016)

\bibitem{GolsePaul1} F.~Golse, T.~Paul. The Schr\"odinger Equation in the Mean-Field and Semiclassical Regime. \emph{Arch. Rational Mech. Anal.}, \textbf{223} (2017), pp. 57--94.

\bibitem{GolsePaul2} F.~Golse, T.~Paul. Empirical Measures and Quantum Mechanics: Applications to the Mean-Field Limit. {\it Commun. Math. Phys.} (2019). https://doi.org/10.1007/s00220-019-03357-z

\bibitem{GolsePaulPulvirenti} F.~Golse, T.~Paul, M.~Pulvirenti. On the Derivation of the Hartree Equation in the Mean Field Limit: Uniformity in the Planck Constant. \emph{J. Funct. Anal.}, {\bf 275} (2018), no. 7, pp. 1603--1649.

\bibitem{GMP}
S. Graffi, A. Martinez and M. Pulvirenti. Mean-Field approximation of quantum systems and classical limit. \emph{Math. Models Methods Appl. Sci.}, {\bf 13} (2003), no. 1, pp. 59--73. 



\bibitem{Lafleche} L. Lafl\`eche. Propagation of Moments and Semiclassical Limit from Hartree to Vlasov Equation.  \emph{J. Stat. Phys.},
\textbf{177} (2019), no. 1, pp. 20--60. 

\bibitem{LMT} M. Lewin, P.S. Madsen, A. Triay. Semi-classical limit of large fermionic systems at positive temperature. \emph{J. Math. Phys.}, \textbf{60} (2019), no. 9.

\bibitem{Lieb} E.~H.~Lieb. Thomas-Fermi and related theories of atoms and molecules. \emph{Rev. Mod. Phys.}, \textbf{53} no. 4  (1981), pp. 603--641.

\bibitem{LSi} E.~H.~Lieb, B.~Simon. The Thomas-Fermi theory of atoms, molecules and solids. \emph{Adv. Math.}, \textbf{23} (1977), pp. 22--116.

\bibitem{LionsPaul} P.-L. Lions, T. Paul. Sur les mesures de Wigner. {\it Rev. 
Mat. Iberoamericana}, {\bf 9} (1993), pp. 553--618. 


 \bibitem{LP} P.-L. Lions, B. Perthame, \emph{Propagation of moments
and regularity  for the 3-dimensional Vlasov-Poisson system}, Invent.
Math., \textbf{105} (1991), pp. 415--430.

\bibitem{MM} P. A. Markowich, N. J. Mauser. The Classical Limit of a Self-Consistent Quantum Vlasov Equation. {\it Math. Models Methods Appl. Sci.}, {\bf 3} (1993), no. 1, pp. 109--124.

\bibitem{NS} H.~{Narnhofer}, G.~L.~{Sewell}. {Vlasov hydrodynamics of a quantum
  mechanical model}. \emph{Comm. Math. Phys.}, \textbf{79} (1981), no.~1, pp. 9--24.



\bibitem{P} S. Petrat. Hartree corrections in a mean-field limit for fermions with Coulomb interaction. {\it J. Phys. A}, {\bf 50} (2017), no. 24.

\bibitem{PP} S. Petrat, P. Pickl. A new method and a new scaling for deriving fermionic mean-field dynamics. {\it Math. Phys. Anal. Geom.},  {\bf 19} (2016), no. 3.

\bibitem{PP} F. Pezzotti, M. Pulvirenti. Mean-field limit and Semiclassical Expansion of a Quantum Particle System. {\it Ann. H. Poincar\'e}, {\bf 10} (2009), no. 1, pp. 145--187.



\bibitem{Pf} K. Pfaffelm{o}ser. {Global existence of the Vlasov-Poisson system in three dimensions for general initial data}, {\it J. Differ. Equ.}, \textbf{95} (1992), pp. 281--303.


\bibitem{PRSS} M.~Porta, S.~Rademacher, C.~Saffirio, B.~Schlein. Mean field evolution of fermions with
Coulomb interaction. \emph{J. Stat. Phys.}, \textbf{166} (2017), pp. 1345--1364.



\bibitem{S18} C.~Saffirio. Mean-field evolution of fermions with singular interaction. Springer Proceedings in Mathematics and Statistics
{\bf 270}, 2018, pp. 81--99.

\bibitem{S19} C.~Saffirio. Semiclassical Limit to the Vlasov Equation with Inverse Power Law Potentials. {\it Comm. Math. Phys. } DOI: 10.1007/s00220-019-03397-5



\bibitem{Spohn81}  H.~{Spohn}. {On the {V}lasov hierarchy}, \emph{Math. Methods Appl. Sci.}, \textbf{3}
  (1981), no.~4, pp. 445--455.


%\bibitem{Stein} E. M. Stein, Singular integrals and differentiability properties of functions. Princeton University Press (1970). 




\end{thebibliography}
 \end{document}